\journal{}
\newtheorem{theorem}{Theorem}[section]
\newtheorem{prop}[theorem]{Proposition}
\newtheorem{cor}[theorem]{Corollary}
\newtheorem{lemma}[theorem]{Lemma}
\newtheorem{define}[theorem]{Definition}
\newtheorem{example}[theorem]{Example}
\newcommand{\Input}[1]
  {\noindent\begin{tabular}{@{}p{1.8cm}@{}p{13.2cm}@{}}
   {\bf Input: }&#1 \end{tabular}}
\newcommand{\Output}[1]
  {\noindent\begin{tabular}{@{}p{1.8cm}@{}p{13.2cm}@{}}
   {\bf Output: }&#1 \end{tabular}}
\newcommand{\fuluzihao}{\fontsize{9pt}{\baselineskip}\selectfont}
\def\Q{{\mathbb{Q}}}
\def\k{{\rm K}}
\def\lm{{\rm lm}}
\def\lpp{{\rm lpp}}
\def\lc{{\rm lc}}
\def\lcm{{\rm lcm}}
\def\max{{\rm max}}
\def\f{{\bf f}}
\def\e{{\bf e}}
\def\u{{\bf u}}
\def\v{{\bf v}}
\def\w{{\bf w}}
\def\t{{\bf t}}
\def\fu{{f^{[\u]}}}
\def\gv{{g^{[\v]}}}
\def\hw{{h^{[\w]}}}
\def\bru{\bar{\u}}
\def\brv{\bar{\v}}
\def\brf{\bar{f}}
\def\brg{\bar{g}}
\def\brfu{{\brf^{[\bru]}}}
\def\brgv{{\brg^{[\brv]}}}
\def\rp{{\rm Representation}}
\def\psf{{\rm IncompleteStandardForm}}
\def\x{{x_1,\cdots,x_n}}
\def\lla{{\longleftarrow}}
\newcommand{\spc}{\hspace*{15pt}}
\newcommand{\comment}[1]{}
\newcommand{\ignore}[1]{}
\newcommand{\gr}{Gr\"obner\,}
\begin{document}

\begin{frontmatter}



\title{Solving Detachability Problem for the Polynomial Ring by Signature-based Gr\"obner Basis Algorithms}


\author{Yao Sun, Dingkang Wang\fnref{label1}}

\fntext[label1]{The authors are supported by NKBRPC 2011CB302400, NSFC 10971217 and 60821002/F02.}

\address{Key Laboratory of Mathematics Mechanization\\ Academy of Mathematics and Systems Science, Chinese Academy of Sciences, Beijing 100190, China}

\ead{sunyao@amss.ac.cn, dwang@mmrc.iss.ac.cn}

\begin{abstract}

Signature-based algorithms are a popular kind of algorithms for computing \gr basis, including the famous F5 algorithm, F5C, extended F5, G$^2$V and the GVW algorithm. In this paper, an efficient  method is proposed to solve the detachability problem.  The new method only uses the {\em outputs} of signature-based algorithms, and no extra \gr basis computations are needed.  When a \gr basis is obtained by signature-based algorithms, the detachability problem can be settled  in polynomial time.
\end{abstract}

\begin{keyword}
Detachability problem, \gr basis, signature-based algorithm, F5, GVW, syzygy.

\end{keyword}

\end{frontmatter}




\section{Introduction}

{\bf The detachability problem of a polynomial ring $R$:} Let $F\subset R$ be a finite subset of polynomials and $f\in R$ be a polynomial. Then
\begin{enumerate}

\item Decide whether $f\in \langle F\rangle$, where $\langle F \rangle$ is the ideal generated by $F$ over $R$.

\item If $f\in \langle F\rangle$, then compute a {\em representation} of $f$ w.r.t. $F$, i.e. compute several polynomials, say $p_i$'s, such that $f=\sum p_if_i$ where $f_i\in F$.

\end{enumerate}
It is well known that solving the detachability problem can be used for solving several other problems, for example, computing a basis for the syzygy module of finite polynomials in $R$, and so on.

The first part of the detachability problem can be solved if a \gr basis for the ideal $\langle F \rangle$ is obtained. Let $G$ be a \gr basis for $\langle F \rangle$, then $f \in \langle F \rangle$ if and only if $f$ is reduced to $0$ by $G$, where the definition of reduction can be found in many books, such as \citep{CLO04}. For the second part of the detachability problem, it suffices to compute a representation for each polynomial $g$ in this \gr basis $G$ w.r.t. $F$. That is, if we have $g_j =\sum q_{j,i}f_i$ for all $g_j\in G$ where $f_i \in F$, then for any $f\in \langle F\rangle$, it is easy to compute $p_j$'s such that $f= \sum p_j g_j$, since $G$ is a \gr basis for $\langle F \rangle$. Therefore, $$f = \sum p_j g_j = \sum_j p_j (\sum_i q_{j,i} f_i) = \sum_i (\sum_j p_j q_{j,i}) f_i$$ is a representation of $f$ w.r.t. $F$.

Therefore, the detachability problem can be settled if {\em representations of polynomials in a \gr basis $G$ w.r.t. $F$} are obtained.

However, if $F$ is not a \gr basis for $\langle F \rangle$, it is very expensive to compute a representation for each $g \in G$ w.r.t. $F$. As we know, a general method for computing representations of polynomials in a \gr basis $G$ w.r.t. $F$ is similar to the method of computing inverse matrices. Specifically, each polynomial $f_i\in F$ is replaced by a new polynomial $f_i + \e_i$ where $\e_i$ is a new variable. Then when computing a \gr basis for the $f_i$ part, the $\e_i$ part records the corresponding track. At last, representations of polynomials in a \gr basis can be obtained directly from these tracks that are recorded by $\e_i$ part. Clearly, this general method needs much more computations than simply computing a \gr basis for $\langle F \rangle$, since more variables ($\e_i$'s) are involved. For more details about this general method, readers are referred to one of the books \citep{Mishra93, Greuel02}.

In current paper, a new method is presented to compute representations of polynomials in a \gr basis w.r.t. the ideal generators. This new method mainly uses the properties of the {\em outputs} of signature-based algorithms, and does not need to revise these signature-based algorithms. Recently, signature-based algorithms are a popular kind of algorithms for computing \gr basis, including the famous F5 algorithm \citep{Fau02}, F5C \citep{Eder09}, extended F5 \citep{Ars09}, G$^2$V \citep{Gao09} and the GVW algorithm \citep{Gao10b}. Related researches on signature-based algorithms are also found in \citep{Stegers05, Eder08, Albrecht10, Arri10, SunWang10a, SunWang10b, SunWang11, SunWang11b, Zobnin10, Huang10, Eder11}.

The F5 and GVW algorithms have already been used to compute a basis for the syzygy module of $F$. In \citep{Ars11}, the F5 algorithm is revised to keep track of multiples of polynomials used in the reduction, and then a basis for the syzygy module of $F$ can be obtained from these tracks. However, this method slows down the F5 algorithm for computing \gr basis for $\langle F \rangle$, since keeping tracks is very costly in both memory and execution time. Gao et al. give a method to get representations of polynomials from the output of the GVW algorithm in \citep{Gao10b}. By their method, a new set of polynomials is obtained, and representations of this new set of polynomials w.r.t. $F$ are obtained at the same time. However, this new set of polynomials should be a \gr basis for $\langle F \rangle$. According to the detailed procedure of GVW, the new set of polynomials, which are constructed by their method from the output of GVW, can be proved to be a \gr basis for $\langle F \rangle$ easily. But it is very difficult to prove that the new set of polynomials, which are constructed by their method from the output of F5, is also a \gr basis for $\langle F \rangle$. So their method cannot be applied to F5 directly.



In this paper, the new method to compute representations of polynomials in a Gr\"obner basis w.r.t. $F$ is based on a new notion for the ideal $\langle F \rangle$: {\em labeled \gr basis}, from which representations of polynomials in a \gr basis are obtained directly. It is very expensive to compute a labeled \gr basis since it contains too much information.  A labeled \gr basis is mainly used in theoretical analysis and is usually not returned by a signature-based algorithm. For sake of efficiency, all existing signature-based algorithms (including F5 and GVW) output simpler versions of labeled \gr basis. So the main work of this paper is to construct labeled \gr bases from these simpler versions.

When the representations of polynomials in a \gr basis $G$ w.r.t. $F$ are obtained, one can also compute a basis for the syzygy module of $F$ easily. There are two efficient approaches. First, if the signatures of polynomials that are reduced to $0$ have been kept in the signature-based algorithms. A basis for the syzygy module of $F$ can be {\em recovered} in polynomial time by using a similar method mentioned in \citep{Gao10b}. Second, since the matrices $A$ and $B$ such that $G = FA$ and $F = GB$ can be got from these representations directly, Buchberger has presented an efficient algorithm to get a basis for the syzygy module of $F$ from a basis for the syzygy module of $G$ in \citep{Buchberger85}. Buchberger's algorithm can also be found in \citep{Mishra93, Greuel02, CLO04}. Note that Buchberger's algorithm is also a polynomial time algorithm if the matrices $A$ and $B$ are known.

This paper is organized as follows. Some preliminaries are presented in Section \ref{sec_pre}. Section \ref{sec_gbrep} shows in detail how to obtain representations of polynomials in a \gr basis from the outputs of signature-based algorithms. An example is given in Section \ref{sec_exa} to show how our method works and concluding remarks follow in Section \ref{sec_con}. Some related proofs are put in \ref{sec_f5}.

\section{Preliminaries} \label{sec_pre}

\subsection{Notations}

Let $R:=\k[\x]$ be a polynomial ring over a field $\k$ with $n$ variables. Suppose $F := \{f_1, \cdots, f_m\}$ is a finite subset of $R$ and $$I:=\langle f_1, \cdots, f_m\rangle=\{p_1f_1+\cdots+p_mf_m \mid p_1,\cdots,p_m \in R\}$$ is the ideal generated by $F$.


Fix a term order $\prec_1$ on $R$. We denote the {\em leading power product}, {\em leading coefficient} and {\em leading monomial} of a polynomial $f\in R$ by $\lpp(f)$, $\lc(f)$ and $\lm(f)$ respectively. For example, let $f := 2x^2y+3z \in \Q[x, y, z]$ be a polynomial and $\prec_1$ be the Graded Reverse Lex order with $z \prec_1 y \prec_1 x$, where $\Q$ is the rational number field. Then $\lpp(f) = x^2y$, $\lc(f) = 2$ and $\lm(f) = 2x^2y$. Note that   $\lm(f) = \lc(f) \lpp(f)$ always holds.


In signature-based algorithms, the map $R^m \longrightarrow I$: $$(p_1, \cdots, p_m) \longmapsto p_1f_1+\cdots+p_mf_m, $$ has been extensively used. Let $\f := (f_1, \cdots, f_m)\in R^m$. Then for any $f\in I$, there always exists (at least) a vector $\u=(p_1, \cdots, p_m)\in R^m$ such that $$f = \u \cdot \f = p_1f_1+\cdots+p_mf_m, $$ where ``$\cdot$" is the inner product of two vectors. Note that such vector $\u$ is not unique. For example, let $(f_1, f_2, f_3) := (yz-x, xz-y, xy-z) \subset \Q[x, y, z]^3$ and $f := y^2-z^2 \in \langle f_1, f_2, f_3 \rangle$, then $f=(0, -y, z) \cdot (f_1, f_2, f_3) = (xz-y, -yz+x-y, z) \cdot (f_1, f_2, f_3)$.

Given $f\in I$ and $\u\in R^m$ such that $f=\u\cdot \f$, we use the notation $\fu$ to express this relation between $f$ and $\u$. Computations on $\fu$ can be defined naturally. Let $\fu$ and $\gv$ be such that $f=\u \cdot \f$ and $g = \v \cdot \f$, $c$ be a constant in $K$ and $t$ be a power product in $R$. Then
\begin{enumerate}
\item $\fu + \gv = (f+g)^{[\u+\v]}$.

\item $ct (f^{[\u]}) = (ctf)^{[ct\u]}$.
\end{enumerate}
The above operations are well defined, since $f+g = (\u+\v) \cdot \f$ and $ctf = (ct\u) \cdot \f$. In fact, the above $\fu$ and $\gv$ are both elements of the following $R$-module: $$\{\fu \mid f=\u \cdot \f \mbox{ and } \u\in R^m\} \mbox{ or equivalently } \{p_1f_1^{[\e_1]} + \cdots + p_m f_m^{[\e_m]} \mid p_1, \cdots, p_m \in R\},$$ where $\e_i$ is the $i$-th unit vector of $R^m$, i.e. $(\e_i)_j=\delta_{ij}$ where $\delta_{ij}$ is the Kronecker delta.

To make the notation $\fu$ easier to understand, {\bf we also call $\fu$ to be a polynomial in $I$ and write $\fu\in I$.} Besides, {\bf the notation $\fu$ always means $f\in I$ and $f=\u\cdot \f$ in this paper}. For two polynomials $\fu$ and $\gv$ in $I$, we say $\fu=\gv$ only when $f = g$ and $\u = \v$.

Fix {\em any} term order $\prec_2$ on $R^m$. We must emphasize that the order $\prec_2$ may or may not be related to $\prec_1$ in theory, although $\prec_2$ is usually an extension of $\prec_1$ to $R^m$ in implementation. We define the {\em leading power product}, {\em leading coefficient} and
{\em leading monomial} of $\u \in R^m$ w.r.t. $\prec_2$ to be $\lpp(\u)$, $\lc(\u)$ and $\lm(\u)$. More related terminologies on ``module" can be found in Chapter 5 of \citep{CLO04}.

For sake of convenience, we use $\prec$ to represent $\prec_1$ and $\prec_2$, if no confusion occurs. We make the convention that if $f=0$ then $\lpp(f)=0$ and $0 \prec t$ for any non-zero power product $t$ in $R$; similarly for $\lpp(\u)$.

For any polynomial $\fu \in I$ where $f=\u \cdot \f$, we define $\lpp(\u)$ to be the {\bf signature} of $\fu$. Original definition of signature is introduced by Faug\`ere in \citep{Fau02}, and recently, Gao et al. give a generalized definition of signature in \citep{Gao10b}. The above definition is given by Gao et al.

\subsection{Full-labeled \gr basis, monomial-labeled Gr\"obner basis and signature-labeled Gr\"obner basis}

Let $$G := \{g_1^{[\v_1]}, \cdots, g_s^{[\v_s]}\}$$ be a finite subset of $I$. We call $G$ a {\bf labeled Gr\"obner basis} or {\bf full-labeled Gr\"obner basis} for $I$, if for any $\fu \in I$ with $f\not=0$, there exists $\gv\in G$ such that
\begin{enumerate}

\item $\lpp(g)$ divides $\lpp(f)$, and

\item $\lpp(t\v) \preceq \lpp(\u)$, where $t=\lpp(f)/\lpp(g)$.
\end{enumerate}

\begin{prop} \label{prop_gb}
If $G$ is a full-labeled \gr basis for $I$, then the set $\{g \mid \gv \in G\}$ is a \gr basis of the ideal $I=\langle f_1, \cdots, f_m\rangle$.
\end{prop}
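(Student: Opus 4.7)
The plan is to reduce the statement to the standard characterization of a Gr\"obner basis: a set $\{g_1,\ldots,g_s\} \subset I$ is a Gr\"obner basis of $I$ iff for every nonzero $f \in I$ there exists some $g_i$ with $\lpp(g_i) \mid \lpp(f)$. So I just need to verify this divisibility condition from the labeled hypothesis.

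First I would pick an arbitrary nonzero $f \in I$. Since $I = \langle f_1,\ldots,f_m \rangle$, by definition there exist $p_1,\ldots,p_m \in R$ with $f = p_1 f_1 + \cdots + p_m f_m$. Setting $\u := (p_1,\ldots,p_m) \in R^m$, we have $f = \u \cdot \f$, so $\fu$ is a well-defined labeled polynomial in $I$ in the sense of the paper. This is the only real ``move'' in the argument: we promote the unlabeled $f$ to some labeled $\fu$, using nothing beyond the definition of $I$.

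Next I would apply the hypothesis directly. Since $G$ is a full-labeled Gr\"obner basis and $\fu \in I$ with $f \neq 0$, there exists $\gv \in G$ satisfying condition (1) of the definition, namely $\lpp(g) \mid \lpp(f)$. This is precisely what is needed: some $g_i \in \{g_1,\ldots,g_s\}$ has $\lpp(g_i) \mid \lpp(f)$. Hence $\{g_1,\ldots,g_s\}$ is a Gr\"obner basis of $I$.

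There is no real obstacle here; the signature condition (2) of the labeled definition plays no role in this particular corollary and may be discarded. The content of the proposition is essentially that the labeled notion is a strict strengthening of the ordinary notion, and the only thing one must notice is that every $f \in I$ admits at least one lift $\u \in R^m$ with $f = \u \cdot \f$, so the labeled hypothesis can always be invoked. Note also that the specific choice of $\u$ does not matter: any valid representation works, because the conclusion we extract depends only on $\lpp(f)$.
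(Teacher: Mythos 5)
Your proposal is correct and follows essentially the same route as the paper's own proof: lift an arbitrary nonzero $f\in I$ to some $\fu$ via an arbitrary representation $f=\u\cdot\f$, then invoke condition (1) of the full-labeled definition to get $\gv\in G$ with $\lpp(g)\mid\lpp(f)$. Nothing is missing.
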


\begin{proof}
For any $f\in \langle f_1, \cdots, f_m\rangle$, there exist $p_1, \cdots, p_m \in R$ such that $f=p_1f_1 + \cdots + p_m f_m$. Let $\u:=(p_1, \cdots, p_m)$. Then $\fu \in I$ and hence there exists $\gv\in G$ such that $\lpp(g)$ divides $\lpp(f)$ by the definition of full-labeled \gr basis.
\end{proof}

However, the reverse of the above proposition is usually not true.

\begin{example} \label{exa_lgb}
Let $\f := (f_1, f_2, f_3) = (xz-y, y^2+xz, 2xy+2x)\in \Q[x,y,z]^3$ where $\Q$ is the rational field, and $I$ be the ideal generated by $\{f_1, f_2, f_3\}$. The order $\prec_1$ on $\Q[x,y,z]$ is the Graded Reverse Lex order with $x \succ_1 y \succ_1 z$, and the order $\prec_2$ on $\Q[x,y,z]^3$ is extended from $\prec_1$ in a position over term fashion, i.e.
$$x^\alpha\e_i \prec_2 x^\beta\e_j    \mbox{  iff  } \left\{\begin{array}{l} i > j, \\ \mbox{ or } \\ i = j \mbox{ and } x^\alpha \prec_1 x^\beta. \end{array}\right.$$
\end{example}

It is evident that $\{f_1, f_2, f_3\}$ itself is a \gr basis for the ideal $\langle f_1, f_2, f_3\rangle$. But the set $G=\{f_1^{[\e_1]}, f_2^{[\e_2]}, f_3^{[\e_3]}\}$ is {\em not} a full-labeled \gr basis for $I$. The reason is that, there exists a polynomial $(2x^2z-2xy)^{[2x\e_2-y\e_3]} = 2x(f_2^{[\e_2]}) - y(f_3^{[\e_3]}) \in I$, and $f_1^{[\e_1]}$ is the only polynomial in $G$ such that $\lpp(f_1) = xz$ divides $\lpp(2x^2z-2xy)=x^2z$. But $x\e_1 \succ_2 \lpp(2x\e_2-y\e_3) = x\e_2$. The readers can check that the set $\{f_1^{[\e_1]}, f_2^{[\e_2]}, f_3^{[\e_3]}, (2x^2z-2xy)^{[2x\e_2-y\e_3]}\}$ is a full-labeled \gr basis for $I$.

However, for sake of efficiency, all signature-based algorithms, including F5 and GVW, do not return a full-labeled \gr basis. Next, we introduce two derived conceptions.

\begin{enumerate}

\item Let $M := \{g_1^{\{c_1\t_1\}}, \cdots, g_s^{\{c_s\t_s\}}\}$ be a finite set, where $c_i$ is a constant in $K$, $\t_i$ has the form of $x^{\alpha_i} \e_j$ and $g_i$ is a polynomial in $R$. The set $M$ is called a {\bf monomial-labeled \gr basis} for $I$, if there exist $\v_1, \cdots, \v_s \in R^m$ such that $g_i = \v_i \cdot (f_1, \cdots, f_m)$, $\lm(\v_i) = c_i \t_i$ and $\{g_1^{[\v_1]}, \cdots, g_s^{[\v_s]}\}$ is a full-labeled \gr basis for $I$.

\item Similarly, let $S := \{g_1^{(\t_1)}, \cdots, g_s^{(\t_s)}\}$ be a finite set, where $\t_i$ has the form of $x^{\alpha_i} \e_j$ and $g_i$ is a polynomial in $R$. The set $S$ is called a {\bf signature-labeled \gr basis} for $I$, if there exist $\v_1, \cdots, \v_s \in R^m$ such that $g_i = \v_i \cdot (f_1, \cdots, f_m)$, $\lpp(\v_i) = \t_i$ and $\{g_1^{[\v_1]}, \cdots, g_s^{[\v_s]}\}$ is a full-labeled \gr basis for $I$.

\end{enumerate}
Note that the notation $g^{[\cdot]}$ is used in full-labeled \gr basis, notation $g^{\{\cdot\}}$ is used in monomial-labeled \gr basis, and $g^{(\cdot)}$ is used in signature-labeled \gr basis.

The only difference between the above two derived conceptions is that, the coefficients of $\v_i$'s are kept in a monomial-labeled \gr basis but not kept in a signature-labeled \gr basis. Both of them are simpler version of full-labeled \gr basis. For instance, in Example \ref{exa_lgb}, the set $\{f_1^{\{\e_1\}}, f_2^{\{\e_2\}}, f_3^{\{\e_3\}}, (2x^2z-2xy)^{\{2x\e_2\}}\}$ is a monomial-labeled \gr basis for $I$ and $\{f_1^{(\e_1)}, f_2^{(\e_2)}, f_3^{(\e_3)}, (2x^2z-2xy)^{(x\e_2)}\}$ is a signature-labeled \gr basis for $I$.

In practical implementation, the G$^2$V and GVW algorithms return monomial-labeled \gr bases, which is shown in \citep{Gao10b}. We will prove in \ref{sec_f5} that F5 computes a signature-labeled \gr basis, and the proofs can also be applied to the variants of F5 after minor revisions. So almost all existing signature-based algorithms compute monomial-labeled \gr bases or signature-labeled \gr bases.




\section{Computing Representations of Polynomials in a Gr\"obner Basis} \label{sec_gbrep}

Let $F := \{f_1, \cdots, f_m\} \subset R$ and $I$ be the ideal generated by $F$. This section is organized as follows. Subsection \ref{subsec_lgb} shows how to express the polynomials in a \gr basis for $I$ as the linear combinations of the polynomials in $F$ with coefficient in $R$ from a full-labeled \gr basis; Subsection \ref{subsec_mlgb} details how to build a full-labeled \gr basis from a monomial-labeled \gr basis; Subsection \ref{subsec_slgb} describes how to construct a monomial-labeled \gr basis from a signature-labeled \gr basis.

\subsection{Express polynomials in a \gr basis as the linear combinations of ideal generators from a full-labeled \gr basis}\label{subsec_lgb}

Let $G := \{g_1^{[\v_1]}, \cdots, g_s^{[\v_s]}\}$ be a full-labeled \gr basis for $I$. Then the set $G_0 = \{g_1, \cdots, g_s\}$ is a \gr basis for $I$ by Proposition \ref{prop_gb}. Moreover, the equation $g_i = \v_i \cdot \f$ always holds by definition where $\f=(f_1, \cdots, f_m)$. Thus, $\v_i$ provides a representation of $g_i$ w.r.t. $F$ for each $g_i\in G_0$.

Regarding the detachability problem, suppose $f$ is a polynomial in $R$. If $f$ is not reduced to $0$ by $G_0$, then $f\notin I$; otherwise, $f\in I$ and there exist $p_1, \cdots, p_s \in R$, such that $$f = p_1g_1 + \cdots + p_s g_s,$$ where $g_i \in G_0$ for $i = 1, \cdots, s$. Let $\u := p_1\v_1 + \cdots + p_s \v_s$. Then we have
$$\u\cdot \f = (p_1\v_1 + \cdots + p_s \v_s) \cdot \f = p_1\v_1 \cdot \f + \cdots + p_s \v_s \cdot \f = p_1g_1 + \cdots + p_s g_s = f.$$
The vector $\u$ provides a representation of $f$ w.r.t. $\{f_1, \cdots, f_m\}$.

Particularly, inter-reducing polynomials in $G_0$ can generate the reduced \gr basis for $I$. Since all polynomials in the reduced \gr basis are elements in $I$, representations of polynomials in the reduced \gr basis for $I$ w.r.t. $F$ can be obtained similarly.


\subsection{Build a full-labeled \gr basis from a monomial-labeled \gr basis} \label{subsec_mlgb}

Let $M := \{g_1^{\{c_1\t_1\}}, \cdots, g_s^{\{c_s\t_s\}}\}$ be a monomial-labeled \gr basis for $I$. Then by definition there exists a full-labeled \gr basis $G = \{g_1^{[\v_1]}, \cdots, g_s^{[\v_s]}\}$ for $I$ such that $\lm(\v_i) = c_i \t_i$. In this subsection, we show how to build a full-labeled \gr basis from $M$, i.e. to compute the polynomials $\v_1,\cdots,\v_s$ such that $\lm(\v_i) = c_i \t_i$. The following proposition plays an important role in this procedure.

\begin{prop} \label{prop_mlgb2lgb}
Let $G := \{g_1^{[\v_1]}, \cdots, g_s^{[\v_s]}\}$ be a full-labeled \gr basis for $I$ which is generated by $\{f_1, \cdots, f_m\}$. If $cx^\alpha \e_j$ is a monomial and $f$ is a polynomial in $I$ such that there exists $\u \in R^m$ with $\lm(\u) = cx^\alpha \e_j$ and $f=\u \cdot \f$ where $\f=(f_1, \cdots, f_m)$, then there exist polynomials $p_1, \cdots, p_s \in R$ such that $$f = cx^\alpha f_j + p_1 g_1 + \cdots + p_s g_s,$$ and $x^\alpha \e_j \succ \lpp(p_i\v_i)$ for $i = 1, \cdots, s$.

Moreover, with the above $p_i$'s, let $$\u' := cx^\alpha\e_j + p_1 \v_1 + \cdots + p_s \v_s.$$ Then we have $\lm(\u') = cx^\alpha\e_j$ and $f = \u' \cdot \f$.
\end{prop}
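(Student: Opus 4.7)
The plan is to prove the proposition by performing a signature-controlled reduction of the ``lower part'' of $\u$ against the full-labeled Gr\"obner basis $G$.

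First, I would split $\u = cx^\alpha\e_j + \u''$, where $\u''$ gathers all terms of $\u$ strictly below its leading monomial, so that $\lpp(\u'') \prec x^\alpha\e_j$. Setting $h := \u''\cdot\f$ yields $f = cx^\alpha f_j + h$, and the labeled polynomial $h^{[\u'']}$ lies in $I$ with signature $\lpp(\u'') \prec x^\alpha\e_j$. The whole statement then reduces to expressing $h$ as $\sum_i p_i g_i$ with $\lpp(p_i\v_i) \prec x^\alpha\e_j$ for every $i$.

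I would then run a signature-aware top reduction of $h$ against $G$. At each step, with current labeled remainder $h_l^{[\w_l]}$ and $h_l\neq 0$, the full-labeled Gr\"obner basis property supplies some $g_{i_l}^{[\v_{i_l}]}\in G$ with $\lpp(g_{i_l})\mid \lpp(h_l)$ and $\lpp(t_l\v_{i_l})\preceq \lpp(\w_l)$, where $t_l = \lpp(h_l)/\lpp(g_{i_l})$. Subtracting $c_l t_l g_{i_l}$ with $c_l = \lc(h_l)/\lc(g_{i_l})$ produces $h_{l+1}^{[\w_l-c_l t_l\v_{i_l}]}$, and using $\lpp(A+B)\preceq\max(\lpp(A),\lpp(B))$ I get $\lpp(h_{l+1})\prec\lpp(h_l)$ as well as $\lpp(\w_{l+1})\preceq\lpp(\w_l)$; inductively the carrier signature never exceeds $\lpp(\u'')$. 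Since $\prec_1$ is a well-ordering on power products of $R$, the chain of leading power products of $h_l$ terminates at $h_k=0$ after finitely many steps, giving $h = \sum_l c_l t_l g_{i_l}$.

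Collecting terms, I set $p_i := \sum_{l\,:\,i_l=i} c_l t_l$. Every individual multiplier satisfies $\lpp(t_l\v_{i_l}) \preceq \lpp(\w_l) \preceq \lpp(\u'') \prec x^\alpha\e_j$, so $\lpp(p_i\v_i) \preceq \max_{l:i_l=i}\lpp(t_l\v_i) \prec x^\alpha\e_j$. This yields $f = cx^\alpha f_j + \sum_i p_i g_i$ with the required signature bound. Finally, defining $\u' := cx^\alpha\e_j + \sum_i p_i\v_i$, the term $cx^\alpha\e_j$ strictly dominates the others in the order $\prec_2$, so $\lm(\u') = cx^\alpha\e_j$, while $\u'\cdot\f = cx^\alpha f_j + \sum_i p_i g_i = f$. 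The main obstacle is the signature bookkeeping during the reduction, namely proving that the carrier signature $\lpp(\w_l)$ stays bounded by $\lpp(\u'')$ throughout and that after combining reduction steps into single coefficients $p_i$ the strict bound $\lpp(p_i\v_i)\prec x^\alpha\e_j$ survives; both follow from the elementary inequality on $\lpp$ of sums, but they must be tracked inductively along the reduction.
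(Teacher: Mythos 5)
Your proposal is correct and follows essentially the same route as the paper's proof: subtract $cx^\alpha f_j$ to get a remainder whose label has signature strictly below $x^\alpha\e_j$, then perform a signature-bounded top reduction against $G$ using the full-labeled Gr\"obner basis property, with termination from the well-ordering of $\prec_1$ and the bound $\lpp(p_i\v_i)\prec x^\alpha\e_j$ carried through the collected multipliers. The only cosmetic difference is that you work explicitly with the tail $\u''$ of $\u$, whereas the paper keeps the label $\w=\u-cx^\alpha\e_j$ implicit (using only $\lpp(\w)\prec x^\alpha\e_j$); the arguments are otherwise identical.
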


\begin{proof}
We present a constructive method for finding the desired $p_i$'s. Initially, all $p_i$'s are set to be $0$.

Consider the polynomial $\hw := \fu - (cx^\alpha f_j)^{[cx^\alpha \e_j]} \in I$. Note that $\w$ is unknown,  since $\u$ is unknown, but such $\w$ does exist and we know $\lpp(\w) \prec \lpp(\u) = x^\alpha \e_j$. In the following, we will {\em not} use the value of $\w$ and we only use the properties that $h = \w \cdot \f$ and $\lpp(\w) \prec x^\alpha \e_j$ where $x^\alpha \e_j$ is known.

We next reduce the polynomial $h$ to $0$ with polynomials in $G$. Specifically, if $h=0$, then $\{p_i = 0 \mid i = 1, \cdots, s\}$ are the desired polynomials; otherwise, since $\hw \in I$, then according to the definition of a full-labeled \gr basis, there exists some $g_i^{[\v_i]} \in G$ such that $\lpp(g_i)$ divides $\lpp(h)$ and $\lpp(t\v_i) \preceq \lpp(\w) \prec x^\alpha \e_j$ where $t=\lpp(h)/\lpp(g_i)$. Denote $h_1^{[\w_1]} := \hw - (\lm(h)/\lm(g_i)) (g_i^{[\v_i]}) \in I$. Clearly, we still have $h_1 = \w_1 \cdot \f$ and $\lpp(\w_1) \preceq \lpp(\w) \prec x^\alpha \e_j$. In order to obtain the desired $p_i$'s at last, we now update the value of $p_i$ by $p_i + \lm(h)/\lm(g_i)$. If $h_1\not=0$, then we repeat the above process. This process must terminate after finite steps, since the term order on $R$ is a well order. Suppose $h_l^{[\w_l]}$ is the last polynomial, then $h_l = 0$ must hold.

With the $p_i$'s obtained in above procedure, we have $f - cx^\alpha f_j = p_1 g_1 + \cdots + p_s g_s$ and $x^\alpha \e_j \succ \lpp(p_i\v_i)$ for $i = 1, \cdots, s$. This   proves the first part of the proposition.

For the second part of the proposition, let $p_i$'s be the polynomials obtained above and let $\u' := cx^\alpha\e_j + p_1 \v_1 + \cdots + p_s \v_s$. We have $\lm(\u') = cx^\alpha\e_j$, since $x^\alpha \e_j \succ \lpp(p_i\v_i)$. And $\u' \cdot \f = (cx^\alpha\e_j + p_1 \v_1 + \cdots + p_s \v_s) \cdot \f = cx^\alpha\e_j \cdot \f + p_1 \v_1\cdot \f + \cdots + p_s \v_s \cdot \f = cx^\alpha f_j + p_1 g_1 + \cdots + p_s g_s = f$.
\end{proof}


From the proof of the above proposition, we find that not all of the polynomials in a full-labeled \gr basis $G$ are necessary during the procedure of constructing these $p_i$'s. So we have the following direct consequence.

\begin{cor}\label{cor_mlgb2lgb}
Let $G := \{g_1^{[\v_1]}, \cdots, g_s^{[\v_s]}\}$ be a full-labeled \gr basis for $I$ which is generated by $\{f_1, \cdots, f_m\}$. Let $cx^\alpha \e_j$ be a monomial, $f$ be a polynomial in $I$ such that there exists $\u \in R^m$ with $\lm(\u) = cx^\alpha \e_j$ and $f=\u \cdot \f$ where $\f=(f_1, \cdots, f_m)$, and $G_{\prec x^\alpha \e_j}$ be the set $\{g_i^{[\v_i]} \in G \mid \lpp(\v_i) \prec x^\alpha \e_j\}\subset G$.

Then there exist polynomials $p_1, \cdots, p_l \in R$ and ${g'_1}^{[\v'_1]}, \cdots, {g'_l}^{[\v'_l]} \in G_{\prec x^\alpha \e_j}$ such that $f = cx^\alpha f_j + p_1 g'_1 + \cdots + p_l g'_l$ and $x^\alpha \e_j \succ \lpp(p_i\v'_i)$ for $i = 1, \cdots, l$. Moreover, with the above $p_i$'s, let $\u' := cx^\alpha\e_j + p_1 \v'_1 + \cdots + p_s \v'_s.$ Then we have $\lm(\u') = cx^\alpha\e_j$ and $f = \u' \cdot \f$.
\end{cor}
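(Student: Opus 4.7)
The plan is to reuse the constructive reduction procedure from the proof of Proposition \ref{prop_mlgb2lgb} essentially verbatim, and to refine the conclusion by inspecting which elements of $G$ are actually invoked as reducers. I would run the same algorithm: start from $\hw := \fu - (cx^\alpha f_j)^{[cx^\alpha \e_j]}$, repeatedly pick a $g_i^{[\v_i]} \in G$ whose leading power product divides that of the current remainder, and update the accumulators $p_i$ accordingly until the remainder becomes $0$. Proposition \ref{prop_mlgb2lgb} already guarantees termination and gives the desired decomposition $f = cx^\alpha f_j + p_1 g_1 + \cdots + p_s g_s$ with $x^\alpha \e_j \succ \lpp(p_i \v_i)$.

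The key observation, and the only thing to add beyond Proposition \ref{prop_mlgb2lgb}, is that every reducer $g_i^{[\v_i]}$ actually selected by the algorithm lies in $G_{\prec x^\alpha \e_j}$. At each reduction step, the selected $g_i^{[\v_i]}$ satisfies $\lpp(t\v_i) \preceq \lpp(\w_k) \prec x^\alpha \e_j$, where $t$ is the power product $\lpp(h_k)/\lpp(g_i) \in R$ and $\w_k$ is the (unknown but existent) signature vector of the current remainder. Because $t$ is a nonzero power product in $R$ and $\prec_2$ is a module term order compatible with multiplication by monomials from $R$, we have $1 \preceq t$ in the base order and hence $\lpp(\v_i) = 1 \cdot \lpp(\v_i) \preceq t\,\lpp(\v_i) = \lpp(t\v_i) \prec x^\alpha \e_j$. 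Thus $g_i^{[\v_i]} \in G_{\prec x^\alpha \e_j}$, and throwing away the $p_i$'s that remained $0$ and relabeling the surviving reducers as ${g'_1}^{[\v'_1]}, \ldots, {g'_l}^{[\v'_l]}$ with corresponding nonzero $p_1,\ldots,p_l$ yields the first half of the conclusion.

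For the second half, I would take the same $\u' := cx^\alpha \e_j + p_1 \v'_1 + \cdots + p_l \v'_l$ as in Proposition \ref{prop_mlgb2lgb}. The inequality $x^\alpha \e_j \succ \lpp(p_i \v'_i)$ established above forces $\lm(\u') = cx^\alpha \e_j$, and plugging in gives $\u' \cdot \f = cx^\alpha f_j + \sum_i p_i g'_i = f$. There is no real obstacle here beyond the module-order compatibility used in the previous paragraph; the whole argument is little more than reading off of the proof of Proposition \ref{prop_mlgb2lgb} which reducers can possibly appear.
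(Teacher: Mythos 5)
Your proposal is correct and follows essentially the same route as the paper, which presents the corollary as a direct consequence of the constructive reduction in the proof of Proposition \ref{prop_mlgb2lgb}: every reducer chosen there satisfies $\lpp(t\v_i) \prec x^\alpha\e_j$, and since $t$ is a power product, $\lpp(\v_i) \preceq \lpp(t\v_i) \prec x^\alpha\e_j$, so all reducers lie in $G_{\prec x^\alpha \e_j}$. Your added remark making this monotonicity of the module term order explicit is exactly the (unstated) observation the paper relies on.
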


Based on the above corollary, the following algorithm builds a full-labeled \gr basis from a monomial-labeled \gr basis.

\smallskip
\noindent {\bf Algorithm --- Mono2FullLGB}\\
\Input{$M = \{g_1^{\{c_1\t_1\}}, \cdots, g_s^{\{c_s\t_s\}}\}$, a monomial-labeled Gr\"obner basis for $I$.}\\
\Output{$G = \{g_1^{[\v_1]}, \cdots, g_s^{[\v_s]}\}$, a full-labeled \gr basis for $I$ such that $\lm(\v_i) = c_i \t_i$.}

\begin{enumerate}

\item Let $G := \emptyset$.

\item Choose $f^{\{c x^\alpha \e_j\}}$ from $M$ with $x^\alpha \e_j \preceq \t_i$ for all $g_i^{\{c_i\t_i\}} \in M$.

\item Remove $f^{\{c x^\alpha \e_j\}}$ from $M$, i.e. $M := M \setminus \{f^{\{c x^\alpha \e_j\}}\}$.

\item Compute $p_i$'s by using Function $\rp(c x^\alpha \e_j, f, G)$, such that $f = c x^\alpha f_j + \sum p_i g_i$ where $g_i^{[\v_i]} \in G$ and $x^\alpha \e_j \succ \lpp(p_i\v_i)$.

\item With the above $p_i$'s, let $\u := c x^\alpha \e_j + \sum p_i \v_i$ where $g_i^{[\v_i]} \in G$.

\item Let $G := G \cup \{\fu\}$.

\item If $M$ is empty, then return $G$; otherwise, goto step 2.

\end{enumerate}

Function {\em Representation} is based on the proof of Proposition \ref{prop_mlgb2lgb}.

\smallskip
\noindent {\bf Function --- $\rp(c x^\alpha \e_j, f, G)$}\\
\Input{$c x^\alpha \e_j$, a monomial; $f$, a polynomial in $I$ such that there exists $\u \in R^m$ with $\lm(\u) = c x^\alpha \e_j$ and $f = \u \cdot \f$; $G = \{g_1^{[v_1]}, \cdots, g_t^{[\v_t]}\}$, a subset of $I$.}\\
\Output{$\{p_1, \cdots, p_t\}$, a set of polynomials in $R$ such that $f = c x^\alpha f_j + p_1 g_1 + \cdots + p_t g_t$ and $x^\alpha \e_j \succ \lpp(p_i\v_i)$.}

\begin{enumerate}

\item Let $(p_1, \cdots, p_t) := (0, \cdots, 0)$ and $h := f - c x^\alpha f_j$.

\item If there exists $g_i^{[\v_i]} \in G$ such that $\lpp(g_i)$ divides $\lpp(h)$ and $(\lpp(h)/\lpp(g_i))\lpp(\v_i) \prec x^\alpha \e_j$, then $h := h-(\lm(h)/\lm(g_i))g_i$ and $p_i := p_i + (\lm(h)/\lm(g_i))$.

\item If $h=0$ then return $\{p_1, \cdots, p_t\}$; otherwise, goto step 2.

\end{enumerate}

As discussed in the proof of Proposition \ref{prop_mlgb2lgb}, Function {\em Representation} always terminates in Algorithm Mono2FullLGB.




\subsection{Construct a monomial-labeled \gr basis from a signature-labeled \gr basis} \label{subsec_slgb}

Let $S := \{g_1^{(\t_1)}, \cdots, g_s^{(\t_s)}\}$ be a signature-labeled Gr\"obner basis for $I$. The goal of this subsection is to find coefficients $c_1, \cdots, c_s\in K$ such that $\{g_1^{\{c_1\t_1\}}, \cdots, g_s^{\{c_s\t_s\}}\}$ is a monomial-labeled \gr basis for $I$. For this purpose, we first study an invariant in the ideal $I$. Note that in this section, the notation $g^{(\t)}$ always means there exists $\gv \in I$ such that $\lpp(\v) = \t$.

Given a term $x^\alpha\e_j \in R^m$, we say a polynomial $\gv \in I$ is a {\bf standard form} of $x^\alpha\e_j$, if
\begin{enumerate}
\item[(1)] $\lpp(\v) = x^\alpha\e_j$, and

\item[(2)] $\lpp(g) \preceq \lpp(f)$ for any $\fu \in I$ with $\lpp(\u) = x^\alpha\e_j$.
\end{enumerate}
Note that the polynomial $g$ in the standard form $\gv$ can be zero polynomial. Standard forms of $x^\alpha\e_j$ is not unique in $I$, but for any two standard forms of $x^\alpha\e_j$, we have the following important property.

\begin{prop} \label{prop_standardform}
Let $x^\alpha\e_j$ be a term in $R^m$. If $\gv, g'^{[\v']} \in I$ are two standard forms of $x^\alpha\e_j$. Then $\lm(g)/\lc(\v) = \lm(g')/\lc(\v')$.
\end{prop}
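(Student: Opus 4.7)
The plan is to decompose the target equation $\lm(g)/\lc(\v) = \lm(g')/\lc(\v')$ into two separate equalities: equality of leading power products $\lpp(g) = \lpp(g')$, and equality of the coefficient ratios $\lc(g)/\lc(\v) = \lc(g')/\lc(\v')$. The first will follow almost immediately from the standard-form definition, while the second is the substantive part, and I would establish it by a proof by contradiction that constructs a new element of $I$ whose signature is still $x^\alpha \e_j$ but whose polynomial part has strictly smaller leading power product than $g$, contradicting the standard-form property of $\gv$.

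For the first equality, I would apply the definition symmetrically. Since $g'^{[\v']} \in I$ has $\lpp(\v') = x^\alpha \e_j$ and $\gv$ is a standard form, we get $\lpp(g) \preceq \lpp(g')$; swapping the roles gives $\lpp(g') \preceq \lpp(g)$, hence $\lpp(g) = \lpp(g')$. I would handle the degenerate case $g = 0$ or $g' = 0$ as a quick remark: the standard-form inequalities force the other to also be zero (using the convention $0 \prec t$ for every nonzero power product $t$), and the equation is then trivial.

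For the second equality, set $c := \lc(\v)$, $c' := \lc(\v')$, $a := \lc(g)/c$, $a' := \lc(g')/c'$, and $L := \lpp(g) = \lpp(g')$. Suppose for contradiction $a \ne a'$. I would form the combination
$$h^{[\u]} \;:=\; \frac{a'}{c}\,\gv \;-\; \frac{a}{c'}\,g'^{[\v']} \;\in\; I,$$
which is legitimate using the $R$-module operations defined in the preliminaries. In the label, the coefficients of $x^\alpha \e_j$ in $\v/c$ and $\v'/c'$ are both $1$, so the leading coefficient of $\u$ at $x^\alpha \e_j$ is $a' - a \ne 0$, giving $\lpp(\u) = x^\alpha \e_j$. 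In the polynomial part, the leading terms of $(a'/c)g$ and $(a/c')g'$ at $L$ are $(a'/c)(ac)L = a'aL$ and $(a/c')(a'c')L = aa'L$ respectively; these cancel exactly, so $\lpp(h) \prec L$. This contradicts $\gv$ being a standard form of $x^\alpha \e_j$, because $h^{[\u]}$ witnesses a smaller leading power product at signature $x^\alpha \e_j$.

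The main obstacle will be spotting exactly the right linear combination: naive cancellations (for instance, $\gv - (c/c')g'^{[\v']}$) kill the signature term rather than the polynomial leading term, which is the opposite of what we need. The key insight is to scale each summand so that the \emph{polynomial} leading coefficients $\lc(g)$ and $\lc(g')$ contribute equally after multiplication, while the label leading coefficients remain distinct — the choice of weights $a'/c$ and $a/c'$ achieves precisely this asymmetric cancellation. Once this combination is in hand, the rest is direct bookkeeping, and the equation $\lm(g)/\lc(\v) = a \cdot L = a' \cdot L = \lm(g')/\lc(\v')$ follows at once from $\lpp(g) = \lpp(g') = L$ and $a = a'$.
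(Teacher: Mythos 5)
Your proposal is correct and follows essentially the same route as the paper: establish $\lpp(g)=\lpp(g')$ by symmetry, dispose of the zero case, and then derive a contradiction from a linear combination of $\gv$ and $g'^{[\v']}$ whose polynomial leading terms cancel while the label still has signature $x^\alpha\e_j$. Indeed, your combination with weights $a'/c$ and $a/c'$ is just a nonzero scalar multiple of the paper's $(1/\lc(g))\gv-(1/\lc(g'))g'^{[\v']}$, so the two arguments coincide up to normalization.
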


\begin{proof}
By the definition of standard form, we have  $\lpp(g) \preceq \lpp(g')$ and $\lpp(g') \preceq \lpp(g)$ since both $\gv $ and $g'^{[\v']} $ are standard forms of $x^\alpha\e_j$. This follows that  $\lpp(g) = \lpp(g')$.

If $g=g'=0$, the equation holds clearly.  We assume that $g$ and $g'$ are  nonzero in the rest of the proof.

It remains to show $\lc(g)/\lc(\v) = \lc(g')/\lc(\v')$. We will prove this by contradiction.
Assume that $\lc(g)/\lc(\v) \not= \lc(g')/\lc(\v')$. Let $\hw := (1/\lc(g))(\gv) - (1/\lc(g'))(g'^{[\v']}) \in I$. We have $\lpp(h) \prec \lpp(g)$, since $h = (1/\lc(g))g - (1/\lc(g'))g'$ and $\lpp(g) = \lpp(g')$. Since $\w=(1/\lc(g))\v - (1/\lc(g'))\v'$ and $\lc(g)/\lc(\v) \not= \lc(g')/\lc(\v')$, we then have $\lpp(\w) = \lpp(\v) = \lpp(\v') = x^\alpha\e_j$. This means $\hw$ is a polynomial in $I$ such that $\lpp(\w) = x^\alpha\e_j$ and $\lpp(h) \prec \lpp(g) = \lpp(g')$. This contradicts  the fact that both $\gv $ and $g'^{[\v']}$ are standard forms of $x^\alpha\e_j$ and complete the proof of the proposition.
\end{proof}

The above proposition shows that {\em for any standard form $\gv \in I$ of $x^\alpha\e_j$, the monomial $\lm(g)/\lc(\v)$ is an invariant to $x^\alpha\e_j$}.

%
%

Given a term $x^\alpha\e_j$, even if only a signature-labeled \gr basis is known, the standard forms of $x^\alpha\e_j$ can be checked. Clearly, if $\gv \in I$, $\lpp(\v) = x^\alpha\e_j$ and $g = 0$, then $\gv$ is a standard form of $x^\alpha\e_j$ by definition.

\begin{prop}\label{prop_computestandardform}
Let $S := \{g_1^{(\t_1)}, \cdots, g_s^{(\t_s)}\}$ be a signature-labeled Gr\"obner basis for $I$, and $x^\alpha\e_j$ be a term in $R^m$. A polynomial $\gv \in I$ with $\lpp(\v) = x^\alpha\e_j$ and $g\not = 0$, is a standard form of $x^\alpha\e_j$, if and only if there is no $g_i^{(\t_i)}\in S$ such that $\lpp(g_i)$ divides $\lpp(g)$ and $t\t_i \prec x^\alpha\e_j$ where $t=\lpp(g)/\lpp(g_i)$.
\end{prop}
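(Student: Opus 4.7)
The plan is to prove both directions by contradiction, each time constructing an explicit polynomial in $I$ that exposes the contradiction through the full-labeled Gr\"obner basis property underlying $S$. Throughout, I would fix vectors $\v_i \in R^m$ with $g_i = \v_i \cdot \f$ and $\lpp(\v_i) = \t_i$, guaranteed to exist by the definition of a signature-labeled Gr\"obner basis, so that $\{g_1^{[\v_1]}, \ldots, g_s^{[\v_s]}\}$ is a full-labeled Gr\"obner basis of $I$.

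For the ``only if'' direction, I would work contrapositively: assume some $g_i^{(\t_i)} \in S$ satisfies $\lpp(g_i) \mid \lpp(g)$ and $t\t_i \prec x^\alpha\e_j$ with $t = \lpp(g)/\lpp(g_i)$, and derive that $\gv$ is not a standard form. The construction is the standard reduction step: set
\[
\hw \;:=\; \gv \;-\; \tfrac{\lm(g)}{\lm(g_i)}\,g_i^{[\v_i]} \;\in\; I.
\]
Then $\lpp(h) \prec \lpp(g)$ by cancellation of leading terms in the polynomial part, while in the signature part $\w = \v - (\lm(g)/\lm(g_i))\v_i$ has $\lpp((\lm(g)/\lm(g_i))\v_i) = t\t_i \prec x^\alpha\e_j = \lpp(\v)$, so $\lpp(\w) = x^\alpha\e_j$. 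This $\hw$ violates condition (2) in the definition of standard form for $\gv$.

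For the ``if'' direction I again work contrapositively: assume $\gv$ is not a standard form, so there exists $\fu \in I$ with $\lpp(\u) = x^\alpha\e_j$ and $\lpp(f) \prec \lpp(g)$. The key trick is to cancel the top of $\u$ and $\v$ without cancelling the top of $g$, so I would form
\[
\hw \;:=\; \lc(\u)\,\gv \;-\; \lc(\v)\,\fu \;\in\; I.
\]
Since $\lc(\u)$ and $\lc(\v)$ are the leading coefficients at the common term $x^\alpha\e_j$, the leading terms in $\w$ cancel and $\lpp(\w) \prec x^\alpha\e_j$; meanwhile, because $\lpp(f) \prec \lpp(g)$ strictly, the leading term of $h$ is $\lc(\u)\lm(g)$, giving $\lpp(h) = \lpp(g)$ and in particular $h \neq 0$. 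Applying the full-labeled Gr\"obner basis property to $\hw$ produces some $g_i^{[\v_i]}$ with $\lpp(g_i) \mid \lpp(h) = \lpp(g)$ and $t\t_i = \lpp(t\v_i) \preceq \lpp(\w) \prec x^\alpha\e_j$ where $t = \lpp(g)/\lpp(g_i)$, which is precisely the configuration the hypothesis forbids.

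The main obstacle is purely the construction in the ``if'' direction: one must simultaneously shrink the signature strictly below $x^\alpha\e_j$ (forcing a cancellation in the module part) while keeping the polynomial part at $\lpp(g)$ so that the full-labeled Gr\"obner basis has something non-trivial to divide. The choice of the specific scalar combination $\lc(\u)\gv - \lc(\v)\fu$ is exactly what achieves both simultaneously, after which the signature-labeled Gr\"obner basis hypothesis closes the argument immediately.
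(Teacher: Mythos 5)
Your proposal is correct and follows essentially the same route as the paper: the same one-step reduction $\gv - (\lm(g)/\lm(g_i))\,g_i^{[\v_i]}$ for the ``only if'' direction, and for the ``if'' direction the same leading-signature cancellation (the paper uses $\gv - (\lc(\v)/\lc(\u))\fu$, a scalar multiple of your $\lc(\u)\gv - \lc(\v)\fu$) followed by applying the full-labeled Gr\"obner basis property to reach the forbidden divisor. No gaps.
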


\begin{proof}
Let $G := \{g_1^{[\v_1]}, \cdots, g_s^{[\v_s]}\}$ be a full-labeled \gr basis for $I$ such that $\lm(\v_i) = c_i \t_i$.

On one hand, let $\gv \in I$, where $g\not=0$, be a standard form of $x^\alpha\e_j$. Assume there is some $g_i^{(\t_i)} \in S$ such that $\lpp(g_i)$ divides $\lpp(g)$ and $t\t_i \prec x^\alpha\e_j$ where $t=\lpp(g)/\lpp(g_i)$. Regarding $g_i^{(\t_i)}$, the polynomial $g_i^{[\v_i]}$ is in $I$ and $t\lpp(\v_i) = t\t_i \prec x^\alpha\e_j$ where $t=\lpp(g)/\lpp(g_i)$. Denote $\hw := \gv  - (\lm(g)/\lm(g_i)) (g_i^{[\v_i]})\in I$. We then have $\lpp(h) \prec \lpp(g)$ and $\lpp(\w) = \lpp(\v) = x^\alpha\e_j$, which contradicts that $\gv $ is a standard form of $x^\alpha\e_j$. So there is no $g_i^{(\t_i)} \in S$ such that $\lpp(g_i)$ divides $\lpp(g)$ and $t\t_i \prec x^\alpha\e_j$ where $t=\lpp(g)/\lpp(g_i)$.

On the other hand, let $\gv$ be a polynomial in $I$ with $\lpp(\v) = x^\alpha\e_j$ and $g\not=0$.
Assume $\gv$ is not a standard form of $x^\alpha\e_j$, then by definition, there exists $\fu \in I$ such that $\lpp(\u) = x^\alpha\e_j$ and $\lpp(g) \succ \lpp(f)$. Next, denote $\hw := \gv - (\lc(\v)/\lc(\u))(\fu) \in I$, then we have $\lpp(h) = \lpp(g)$ and $\lpp(\w) \prec \lpp(\v) = x^\alpha\e_j$. Since $\hw$ is a polynomial in $I$, according to the definition of full-labeled \gr basis, there exists $g_i^{[\v_i]} \in G$ such that $\lpp(g_i)$ divides $\lpp(h) = \lpp(g)$ and $\lpp(t\v_i) \preceq \lpp(\w) \prec  x^\alpha\e_j$ where $t=\lpp(g)/\lpp(g_i)$. Note that $\t_i= \lpp(\v_i) $, then $g_i^{(\t_i)}$ is in $S$ such that $\lpp(g_i)$ divides $\lpp(g)$ and $t\t_i = \lpp(t\v_i) \preceq \lpp(\w) \prec x^\alpha\e_j$ where $t=\lpp(g)/\lpp(g_i)$. This is a contradiction. So $\gv$ must be a standard form of $x^\alpha\e_j$.
\end{proof}




With the above proposition, given a signature-labeled \gr basis $S$ for $I$ and a polynomial $\fu \in I$ with $\lpp(\u) = x^\alpha\e_j$, we can check whether $\fu$ is a standard form of $x^\alpha\e_j$.
 The following function, which is deduced from the above proposition, would compute an {\em incomplete} standard form of $x^\alpha \e_j$ from $f^{(x^\alpha \e_j)}$, where $f^{(x^\alpha \e_j)}$ means there exists $\fu \in I$ such that $\lpp(\u) = x^\alpha \e_j$. This incomplete version of standard form is very useful for constructing a monomial-labeled \gr basis from a signature-labeled \gr basis.

\smallskip
\noindent {\bf Function --- $\psf(f^{(x^\alpha \e_j)}, S)$}\\
\Input{$f^{(x^\alpha \e_j)}$, there exists $\fu \in I$ such that $\lpp(\u) = x^\alpha \e_j$; $S = \{g_1^{(\t_1)}, \cdots, g_s^{(\t_s)}\}$, a signature-labeled \gr basis for $I$.}\\
\Output{$g^{(x^\alpha \e_j)}$, there exists $\gv \in I$ such that $\lpp(\v) = x^\alpha \e_j$ and $\gv$ is a standard form of $x^\alpha\e_j$.}

\begin{enumerate}

\item Let $g := f$.

\item If there exists $g_i^{(\t_i)}\in S$ such that $\lpp(g_i)$ divides $\lpp(g)$ and $(\lpp(g)/\lpp(g_i))\t_i \prec x^\alpha \e_j$, then $g:=g-(\lm(h)/\lm(g_i))g_i$; otherwise, return $g^{(x^\alpha \e_j)}$.

\item If $g=0$ then return $0^{(x^\alpha \e_j)}$; otherwise, goto step 2.

\end{enumerate}

Using this incomplete version of standard form, we can now construct a monomial-labeled \gr basis from a signature-labeled \gr basis.




\begin{theorem} \label{thm_sbgb2mbgb}
Let $S  := \{g_1^{(\t_1)}, \cdots, g_s^{(\t_s)}\}$ be a signature-labeled Gr\"obner basis for $I$, and $f^{(x^\alpha\e_j)}$ be such that there exists $\fu \in I$ with $\lpp(\u) = x^\alpha\e_j$. Suppose $g^{(x^\alpha\e_j)}$ is an incomplete standard form of $x^\alpha\e_j$ computed from $f^{(x^\alpha\e_j)}$, and $g_0^{(x^\alpha\e_j)}$ is an incomplete standard form of $x^\alpha\e_j$ computed from $(x^\alpha f_j)^{(x^\alpha\e_j)}$. If $g\not=0$, let $c:=\lc(g)/\lc(g_0)$; otherwise, let $c:=1$. Then there exists $f^{[\u']} \in I$ such that $\lm(\u') = c x^\alpha\e_j$.
\end{theorem}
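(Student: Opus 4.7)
The plan is to follow the IncompleteStandardForm computation on the signature side, not just the polynomial side, and then to apply the invariant from Proposition~\ref{prop_standardform} to pin down $\lc(\u)$. First I would fix a full-labeled Gr\"obner basis $G = \{g_1^{[\v_1]}, \ldots, g_s^{[\v_s]}\}$ underlying $S$, so that $g_i = \v_i \cdot \f$ and $\lpp(\v_i) = \t_i$. Starting from any $\u$ witnessing $f^{(x^\alpha\e_j)}$, each reduction step $g \leftarrow g - (\lm(g)/\lm(g_i))g_i$ of the function corresponds to replacing the implicit signature vector by $\u - (\lm(g)/\lm(g_i))\v_i$. Because the function performs a reduction only when $(\lpp(g)/\lpp(g_i))\t_i \prec x^\alpha \e_j$, the subtracted vector has signature strictly below $x^\alpha\e_j$, so the leading monomial of $\u$ is preserved throughout.

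After all reductions, the output comes with an implicit vector $\u^*$ satisfying $g = \u^* \cdot \f$, $\lpp(\u^*) = x^\alpha\e_j$, and $\lc(\u^*) = \lc(\u)$. The termination condition of the function is exactly the standard-form criterion of Proposition~\ref{prop_computestandardform}, so $g^{[\u^*]}$ is a standard form of $x^\alpha\e_j$; running the same argument on the input $(x^\alpha f_j)^{(x^\alpha\e_j)}$ with initial signature vector $x^\alpha\e_j$ produces $\u_0^*$ with $g_0 = \u_0^* \cdot \f$, $\lc(\u_0^*) = 1$, and $g_0^{[\u_0^*]}$ also a standard form. Proposition~\ref{prop_standardform} applied to these two standard forms then gives $\lm(g)/\lc(\u^*) = \lm(g_0)/\lc(\u_0^*) = \lm(g_0)$.

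When $g \neq 0$, the equality $\lpp(g)=\lpp(g_0)$ (part of the standard-form conclusion) forces $\lc(\u^*) = \lc(g)/\lc(g_0) = c$, so $\lc(\u) = c$ and $\u' := \u$ already satisfies $\lm(\u') = c x^\alpha\e_j$ with $\u' \cdot \f = f$. When $g = 0$, Proposition~\ref{prop_standardform} forces $g_0 = 0$ as well, so $\u^*$ is a syzygy with $\lpp(\u^*) = x^\alpha\e_j$ and nonzero leading coefficient; setting $\u' := \u + (1/\lc(\u) - 1)\u^*$ preserves $\u' \cdot \f = f$ and gives $\lc(\u') = 1$, hence $\lm(\u') = x^\alpha\e_j = c x^\alpha\e_j$. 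The main obstacle I anticipate is the bookkeeping to lift the polynomial reductions to the signature vectors and to confirm no leading terms cancel unexpectedly, in particular that the inequality $(\lpp(g)/\lpp(g_i))\t_i \prec x^\alpha\e_j$ really forces the subtractions on the signature side to stay strictly below $x^\alpha\e_j$; once this step is cleanly justified, the two applications of Proposition~\ref{prop_standardform} close the argument.
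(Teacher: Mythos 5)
Your proposal is correct and takes essentially the same route as the paper's proof: you lift the IncompleteStandardForm reductions to the module level over a full-labeled Gr\"obner basis underlying $S$, observe that the condition $(\lpp(g)/\lpp(g_i))\t_i \prec x^\alpha\e_j$ keeps the leading monomial of the implicit vector fixed, recognize both outputs as standard forms, and apply Proposition~\ref{prop_standardform} to conclude $\lc(\u)=c$. The only cosmetic difference is the $g=0$ case, where you normalize by adding a multiple of the tracked syzygy $\u^*$ while the paper subtracts a multiple of a standard form $0^{[\v]}$ from $\fu$, which is the same adjustment; no gaps.
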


\begin{proof}
We begin with the case $g=0$, and $c=1$ in this case. By Proposition \ref{prop_standardform}, there must exist $0^{[\v]} \in I$ such that $\lpp(\v) = x^\alpha\e_j$ and $0^{[\v]}$ is a standard form of $x^\alpha\e_j$. As there exists $\fu \in I$ with $\lpp(\u) = x^\alpha\e_j$, let $\hw := \fu  - ((\lc(\u) - 1)/\lc(\v)) (0^{[\v]}) \in I$. Note that $\w = \u - ((\lc(\u) - 1)/\lc(\v))\v = \u - (\lc(\u)/\lc(\v))\v + (1/\lc(\v))\v$ and $\lpp(\u) = \lpp(\v) = x^\alpha\e_j$. We then have $\lc(\w) = 1$, $\lpp(\w) = x^\alpha\e_j$ and $h = f$. So $f^{[\w]} \in I$ is the desired $f^{[\u']}$.

Next, we deal with the case $g\not=0$, and now $c=\lc(g)/\lc(g_0)$. Let $G := \{g_1^{[\v_1]}, \cdots, g_s^{[\v_s]}\}$ be a full-labeled \gr basis for $I$ such that $\lpp(\v_i) = \t_i$ for $i=1, \cdots, s$.

Since $g^{(x^\alpha\e_j)}$ is an incomplete standard form of $x^\alpha\e_j$ computed from $f^{(x^\alpha\e_j)}$, according to the detailed procedure of Function {\em IncompleteStandardForm}, there exist polynomials $p_1, \cdots, p_s \in R$ such that $$f = p_1 g_1 + \cdots + p_s g_s + g,$$ where $x^\alpha\e_j \succ \lpp(p_i\v_i)$ for $i = 1, \cdots, s$. Note that these $p_i$'s can be obtained similarly as in Function {\em Representation}, but we do not really need to compute them here. As there exists $\fu \in I$ with $\lpp(\u) = x^\alpha\e_j$, let $$\hw := \fu  - p_1(g_1^{[\v_1]}) - \cdots - p_s(g_s^{[\v_s]}) \in I.$$ Clearly, we have $h = g$. Since $\w = \u  - p_1 \v_1 - \cdots - p_s \v_s$ and $\lpp(\u) = x^\alpha\e_j \succ \lpp(p_i\v_i)$, then we have $\lc(\w) = \lc(\u)$ and $\lpp(\w) = \lpp(\u) = x^\alpha\e_j$. Moreover, $g^{[\w]} \in I$ is a standard form of $x^\alpha\e_j$.

Similarly, since $g_0^{(x^\alpha\e_j)}$ is an incomplete standard form of $x^\alpha\e_j$ computed from $(x^\alpha f_j)^{(x^\alpha\e_j)}$, by the detailed procedure of Function {\em IncompleteStandardForm}, there exist polynomials $q_1, \cdots, q_s \in R$ such that $$x^\alpha f_j = q_1 g_1 + \cdots + q_s g_s + g_0,$$ where $x^\alpha\e_j \succ \lpp(q_i\v_i)$ for $i = 1, \cdots, s$. Note that $x^\alpha f_j = (x^\alpha\e_j) \cdot (f_1, \cdots, f_m)$, i.e. $(x^\alpha f_j)^{[x^\alpha\e_j]} \in I$, let $$h_0^{[\w_0]} := (x^\alpha f_j)^{[x^\alpha\e_j]} - q_1 (g_1^{[\v_1]}) - \cdots - q_s(g_s^{[\v_s]}) \in I.$$ Clearly, we have $h_0 = g_0$. Since $\w_0 = x^\alpha\e_j - q_1 \v_1 - \cdots - q_s \v_s$ and $x^\alpha\e_j \succ \lpp(p_i\v_i)$, then we have $\lc(\w_0) = 1$ and $\lpp(\w_0) = x^\alpha\e_j$. And $g_0^{[\w_0]} \in I$ is also a standard form of $x^\alpha\e_j$.

Since $g^{[\w]}$ and $g_0^{[\w_0]}$ are both standard forms of $x^\alpha\e_j$, Proposition \ref{prop_standardform} shows $$\lm(g)/\lc(\w) = \lm(g_0)/\lc(\w_0).$$ Note that $\lc(\w) = \lc(\u)$ and $\lc(\w_0) = 1$, so we obtain $\lc(\u) = \lc(g)/\lc(g_0) = c$. Then $\fu $ is the desired $f^{[\u']}$, which proves the theorem.
\end{proof}

With the above theorem, given a signature-labeled \gr basis $S$ for $I$, we can compute $c_i$ for each $g_i^{(\t_i)} \in S$ such that there exists $g_i^{[\v_i]} \in I$ with $\lm(\v_i) = c_i \t_i$. With these $c_i$'s, then we can construct a monomial-labeled \gr basis for $I$ from $S$.

\smallskip
\noindent {\bf Algorithm --- Sig2MonoLGB}\\
\Input{$S = \{g_1^{(\t_1)}, \cdots, g_s^{(\t_s)}\}$, a signatue-labeled Gr\"obner basis for $I$.}\\
\Output{$M = \{g_1^{\{c_1\t_1\}}, \cdots, g_s^{\{c_s\t_s\}}\}$, a monomial-labeled \gr basis for $I$.}

\begin{enumerate}

\item Let $i:=1$.

\item Choose $g_i^{(\t_i)}$ from $S$, and assume $\t_i$ has the form $x^\alpha \e_j$.

\item Compute an incomplete standard form $g^{(x^\alpha\e_j)}$ of $x^\alpha\e_j$ from $g_i^{(x^\alpha \e_j)}$ by using Function $\psf(g_i^{(x^\alpha \e_j)}, S)$.

\item Compute another incomplete standard form $g_0^{(x^\alpha\e_j)}$ of $x^\alpha\e_j$ from $(x^\alpha f_j)^{(x^\alpha \e_j)}$ by using Function $\psf((x^\alpha f_j)^{(x^\alpha \e_j)}, S)$.

\item If $g\not=0$, let $c_i:=\lc(g)/\lc(g_0)$; otherwise, let $c_i:=1$.

\item If $i = s$, then return $\{g_1^{\{c_1\t_1\}}, \cdots, g_s^{\{c_s\t_s\}}\}$; otherwise, let $i:=i+1$ and goto step 2.

\end{enumerate}




\section{An Illustrative Example} \label{sec_exa}

We will prove in \ref{sec_f5} that F5 computes a signature-labeled \gr basis. In this section, we use the example from \citep{Fau02} to illustrate (1) how to construct a monomial-labeled \gr basis from a signature-labeled \gr basis, (2) how to build a full-labeled \gr basis from a monomial-labeled \gr basis, (3) how to get representations of polynomials in a \gr basis, and (4) how to solve the detachability problem.

\begin{example}
Let $\f = (f_1, f_2, f_3) \in \Q[x, y, z, t]^3$ where $f_1 = yz^3-x^2t^2$, $f_2 = xz^2-y^2t$ and $f_3 = x^2y-z^2t$. The term order $\prec_1$ on $\Q[x, y, z, t]$ is the Degree Reverse Lex order with $x \succ_1 y \succ_1 z \succ_1 t$ and the term order $\prec_2$ on $\Q[x, y, z, t]^3$ is extended from $\prec_1$ in a position over term fashion, i.e.
$$x^\alpha\e_i \prec_2 x^\beta\e_j    \mbox{  iff  } \left\{\begin{array}{l} i > j, \\ \mbox{ or } \\ i = j \mbox{ and } x^\alpha \prec_1 x^\beta. \end{array}\right.$$

A signature-labeled \gr basis returned by F5 is $S = \{g_i^{(\t_i)} \mid i=1, \cdots, 10\}$, where $$g_1^{(\t_1)} = (yz^3-x^2t^2)^{(\e_1)}, \spc g_2^{(\t_2)} = (xz^2-y^2t)^{(\e_2)}, \spc g_3^{(\t_3)} = (x^2y-z^2t)^{(\e_3)}, $$ $$ g_4^{(\t_4)} = (xy^3t-z^4t)^{(xy\e_2)}, \spc g_5^{(\t_5)} = (z^6t-y^5t^2)^{(xyz^2\e_2)},\spc g_6^{(\t_6)} = (y^3zt-x^3t^2)^{(x\e_1)}, $$ $$g_7^{(\t_7)} = (z^5t-x^4t^2)^{(x^2\e_1)}, \spc g_8^{(\t_8)}= (y^5t^2-x^4zt^2)^{(x^2z\e_1)}, $$ $$g_9^{(\t_9)} = (x^5t^2-z^2t^5)^{(x^3\e_1)}, \mbox{ and } \ g_{10}^{(\t_{10})} = (y^6t^2-xy^2zt^4)^{(z^3t\e_1)}.$$
\end{example}

In this example, $I$ is the ideal $\langle f_1, f_2, f_3 \rangle \subset \Q[x, y, z, t]$.

{\bf (1): Construct a monomial-labeled \gr basis $M$ from $S$.}

To construct a monomial-labeled \gr basis from $S$, we need to find the corresponding coefficients for all $\t_i$'s. Here we only take $g_4^{(\t_4)} = (xy^3t-z^4t)^{(xy\e_2)}$ for example, and the other cases are similar.

By Algorithm Sign2MonoLGB, we need to compute two {\em incomplete} standard forms from $(xy^3t-z^4t)^{(xy\e_2)}$ and $(xyf_2)^{(xy\e_2)}$ respectively. Using Function {\em IncompleteStandardForm}, the incomplete standard form of $xy\e_2$ computed from $(xy^3t-z^4t)^{(xy\e_2)}$ is $(xy^3t-z^4t)^{(xy\e_2)}$ itself. Next, we compute an incomplete standard form of $xy\e_2$ from $(xyf_2)^{(xy\e_2)}$ in detail. In Function {\em IncompleteStandardForm}, initially we have $g := xyf_2 = x^2yz^2-xy^3t$. Note that there exists $g_3^{(\t_3)} \in S$ such that $\lpp(g_3) = x^2y$ divides $\lpp(g) = x^2yz^2$ and $(x^2yz^2/x^2y)\lpp(t_3) = z^2\e_3 \prec xy\e_2$. Then let $g:= g - (x^2yz^2/x^2y)g_3 = -xy^3t+z^4t$. For this $g$, there does not exist $g_i^{(\t_i)} \in S$ such that $\lpp(g_i)$ divides $\lpp(g)$ and $(\lpp(g)/\lpp(g_i))\lpp(\t_i) \prec x^\alpha \e_j$. So $(-xy^3t+z^4t)^{(xy\e_2)}$ is the incomplete standard form computed from $(xyf_2)^{(xy\e_2)}$. So  $(xy^3t-z^4t)^{(xy\e_2)}$ and $(-xy^3t+z^4t)^{(xy\e_2)}$ are both {\em incomplete} standard forms of $xy\e_2$, and by Theorem \ref{thm_sbgb2mbgb}, we get $c_4 = -1$.

After obtaining other coefficients $c_i$'s for $g_i^{(\t_i)}$'s, we get a monomial-labeled Gr\"obner basis $$M =\{g_3^{\{\t_3\}}, g_2^{\{\t_2\}}, g_4^{\{-\t_4\}}, g_5^{\{\t_5\}}, g_1^{\{\t_1\}}, g_6^{\{\t_6\}}, g_7^{\{\t_7\}}, g_8^{\{\t_8\}}, g_9^{\{-\t_9\}}, g_{10}^{\{\t_{10}\}}\}.$$ Note that we have sorted the elements in $M$ in an incremental order w.r.t. $\prec_2$ on the $\t_i$'s.

{\bf (2): Build a full-labeled \gr basis $G$ from $M$.}

For this purpose, we use Algorithm Mono2FullLGB. Initially, let $G^{(0)}:=\emptyset$.

{\em Loop 1:} The element with the smallest signature in $M$ is $g_3^{\{\t_3\}} = (x^2y-z^2t)^{\{\e_3\}}$ and then $M := M \setminus \{g_3^{\{\t_3\}}\}$. In Function {\em Representation}, since $h := g_3 - f_3 = 0$, we get a representation $g_3 = f_3$ and all $p_i = 0$. Then $\v_3 := \e_3$, and $G^{(1)} := \{g_3^{[\v_3]}\} = \{(x^2y-z^2t)^{[\e_3]}\} \subset I$.

{\em Loop 2:} The element with the smallest signature in $M$ is $g_2^{\{\t_2\}} = (xz^2-y^2t)^{\{\e_2\}}$ and then $M := M \setminus \{g_2^{\{\t_2\}}\}$. Since $g_2 = f_2$, similarly as Loop 1, we have $\v_2 := \e_2$ and $G^{(2)} := \{g_3^{[\v_3]}, g_2^{[\v_2]}\} =  \{(x^2y-z^2t)^{[\e_3]}, (xz^2-y^2t)^{\{\e_2\}}\} \subset I$.

{\em Loop 3:} The element with the smallest signature in $M$ is $g_4^{\{-\t_4\}} = (xy^3t-z^4t)^{\{-xy\e_2\}}$ and then $M := M \setminus \{g_4^{\{-\t_4\}}\}$. Next, we use Function {\em Representation} to compute a representation of $g_4$. Now $G^{(2)} = \{g_3^{[\v_3]}, g_2^{[\v_2]}\}$. So initially, we have $p_3 := 0$, $p_2 := 0$ and $h := g_4 - (-xy)f_2 = x^2yz^2-z^4t$. On seeing there exists $g_3^{[\v_3]} \in G^{(2)}$ such that $\lpp(g_3) = x^2y$ divides $\lpp(h) = x^2yz^2$ and $(x^2yz^2/x^2y) \lpp(\v_3) = z^2\e_3 \prec xy\e_2$, let $h := h - (x^2yz^2/x^2y) g_3 = 0$ and $p_3 := p_3 + (x^2yz^2/x^2y) = z^2$. Since $h = 0$, we get a representation $$g_4 = -xyf_2+p_3g_3 = -xyf_2 + z^2 g_3.$$ Let $\v_4 := -xy\e_2 + p_3\v_3 = -xy\e_2 + z^2\e_3$, then $(xy^3t-z^4t)^{[-xy\e_2 + z^2\e_3]}$ is a polynomial in $I$, and $G^{(3)} := \{g_3^{[\v_3]}, g_2^{[\v_2]}, g_4^{[\v_4]}\} = \{(x^2y-z^2t)^{[\e_3]}, (xz^2-y^2t)^{\{\e_2\}}, (xy^3t-z^4t)^{[-xy\e_2 + z^2\e_3]}\} \subset I$.

Similarly, we can obtain $\v_4, \cdots, \v_{10}$. At last, we get a full-labeled \gr basis for $I$: $$G = \{g_3^{[\e_3]}, g_2^{[\e_2]}, g_4^{[-xy\e_2 + z^2\e_3]}, g_5^{[xyz^2 \e_2  + y^3t \e_2 - z^4 \e_3]}, g_1^{[\e_1]}, g_6^{[x\e_1 - yz \e_2]}, g_7^{[x^2\e_1 - z^3 \e_3]}, $$ $$ g_8^{[x^2z\e_1 - xyz^2\e_2 - y^3t\e_2]}, g_9^{[-x^3\e_1+yt^2\e_1+z^3t\e_2+xz^3\e_3+t^4\e_3]}, g_{10}^{[z^3t\e_1 - xy^2z^2\e_2 - y^4t\e_2 + xzt^3\e_2 + yz^4\e_3]}\}.$$

{\bf (3): Obtain representations of polynomials in a \gr basis from $G$.}

The set $\{g_3, g_2, g_4, g_5, g_1, g_6, g_7, g_8, g_9, g_{10}\}$ is a \gr basis for $\langle f_1, f_2, f_3\rangle$ by Proposition \ref{prop_gb}, and $G$ provides a representation of each $g_i$ w.r.t. $\{f_1, f_2, f_3\}$ directly. For instance, $g_8^{[x^2z\e_1 - xyz^2\e_2 - y^3t\e_2]} \in G$ indicates $$g_8 = (x^2z\e_1 - xyz^2\e_2 - y^3t\e_2) \cdot (f_1, f_2, f_3) = x^2z f_1 - (xyz^2 + y^3t) f_2.$$

{\bf (4): Solve the detachability problem.}

Let $f := xz^6t - x^5zt^2 + x$ be a polynomial in $\Q[x, y, z, t]$. Reducing $f$ by the set $\{g_3, g_2, g_4, g_5, g_1, g_6, g_7, g_8, g_9, g_{10}\}$, the remainder is $x$ which is not $0$, so $f\notin \langle f_1, f_2, f_3\rangle$.

Let $f := x^6yt^2-xyz^2t^5-xz^6t+x^5zt^2$ be another polynomial in $\Q[x, y, z, t]$. Reducing $f$ by $\{g_3, g_2, g_4, g_5, g_1, g_6, g_7, g_8, g_9, g_{10}\}$, we get:
$$f = xy g_9 - x g_5 - x g_8,$$ which means $f\in \langle f_1, f_2, f_3\rangle$. Next, let $$\u := xy \v_9 - x \v_5 - x \v_8 = (-x^4y + xy^2t^2 - x^3z)\e_1 + xyz^3t \e_2 + (x^2yz^3 + xyt^4 + xz^4) \e_3,$$ which indicates $$f = \u \cdot (f_1, f_2, f_3) = (-x^4y + xy^2t^2 - x^3z) f_1 + xyz^3t f_2 + (x^2yz^3 + xyt^4 + xz^4) f_3.$$

Particularly, note that the set $\{g_3, g_2, g_4, g_1, g_6, g_7, g_8, g_9\}$ is the reduced \gr basis for $I$,  and then representations of polynomials in the reduced \gr basis w.r.t. $\{f_1, f_2, f_3\}$ are as follows:
$$g_3 = f_3, g_2 = f_2, g_4 = -xy f_2 + z^2 f_3, g_1 = f_1, g_6 = x f_1 - yz f_2, g_7 = x^2 f_1 - z^3 f_3, $$ $$g_8 = x^2z f_1 - (xyz^2 + y^3t) f_2,  g_9 = (-x^3 + yt^2) f_1+ z^3t f_2 + (xz^3 + t^4) f_3.$$




\section{Conclusions} \label{sec_con}

A new method to solve the detachability problem of a polynomials is proposed in this paper. The new method only uses the {\em outputs} of signature-based algorithms. To solve the detachability problem, we propose two efficient algorithms. One is to compute full-labeled Groebner basis from a monomial-labeled Groebner basis, the other one is to compute monomial-labeled Groebner basis from a signature-labeled Groebner basis. It is quite easy to check that these two algorithms have polynomial time complexities. Once the full-labeled Groebner basis is known, the detachability problem can be solved directly.

%

\begin{appendix}


\fuluzihao

\section{F5 Computes a signature-labeled \gr Basis} \label{sec_f5}

The proofs in this section are similar to the proofs in \citep{SunWang11}. The proofs are complicated, because these proofs do not depend on the computing order of critical pairs in F5.

Let $\f := (f_1, \cdots, f_m)\in R^m$ and $I$ be the ideal generated by $\{f_1, \cdots, f_m\}$. In F5, the term order $\prec_1$ on $R$ can be any term order, and the term order $\prec_2$ on $R$ is extended from $\prec_1$ in a {\em position over term} fashion. That is, $$x^\alpha\e_i \prec_2 x^\beta\e_j    \mbox{  iff  }
\left\{\begin{array}{l} i > j, \\ \mbox{ or } \\ i = j \mbox{ and
} x^\alpha \prec_1 x^\beta.
\end{array}\right. $$
Thus we have $\e_m \prec_2 \e_{m-1} \prec_2 \cdots \prec_2 \e_1$.

\subsection{F5 Basics} \label{subsec_f5_basic}

Given a term $x^\alpha\e_i$ in $R^m$ and a polynomial $f$ in $R$, we say $f^{(x^\alpha\e_i)}$ is an {\bf admissible labeled polynomial},\footnote{In most papers, such as \citep{Stegers05}, admissible labeled polynomials has the form of $(x^\alpha\e_i, f)$, which is equivalent to the form $f^{(x^\alpha\e_i)}$.} if there exists $\fu \in I$ such that $\lpp(\u) = x^\alpha$. Note that this definition is consistent with our previous definition of $f^{(x^\alpha\e_i)}$. Let $f^{(x^\alpha\e_i)}$ and $g^{(x^\beta\e_j)}$ be two admissible labeled polynomials, $c$ be a constant in $K$ and $t$ be a power product in $R$. Then define: (1) $f^{(x^\alpha\e_i)} + g^{(x^\beta\e_j)} = (f+g)^{(x^\gamma\e_k)}$ where $x^\gamma\e_k = \max_{\prec}\{x^\alpha\e_i, x^\beta\e_j\}$, and (2) $ct (f^{(x^\alpha\e_i)}) = (ctf)^{(ct x^\alpha\e_i)}$. We next introduce some basic definitions in F5.


\begin{define}[Syzygy Criterion]
Let $B$ be a set of admissible labeled polynomials, $f^{(x^\alpha\e_i)} \in B$ be an admissible labeled polynomial, and $t$ be a power product in $R$. We say $t(f^{(x^\alpha\e_i)}) = (tf)^{(tx^\alpha\e_i)}$ is {\bf F5-divisible} by $B$, if there exists $g^{(x^\beta\e_j)}\in B$ with $g\not=0$ such that $\lpp(g)$ divides $tx^\alpha$ and $\e_i \succ \e_j$.
\end{define}

\begin{define}[Rewritten Criterion]
Let $B$ be a set of admissible labeled polynomials, $f^{(x^\alpha\e_i)} \in B$ be an admissible labeled polynomial, and $t$ be a power product in $R$. We say $t(f^{(x^\alpha\e_i)}) = (tf)^{(tx^\alpha\e_i)}$ is {\bf F5-rewritable} by $B$, if there exists $g^{(x^\beta\e_i)} \in B$ such $x^\beta$ divides $tx^\alpha$ and $g^{(x^\beta\e_i)}$ is added to $B$ later than $f^{(x^\alpha\e_i)}$.
\end{define}

Note that the computing order of admissible labeled polynomials in $B$ is very important to  Rewritten Criterion. For convenience, we use an order ``$<$" defined on $B$ to reflect this computing order. Let $f^{(x^\alpha\e_i)}$ and $g^{(x^\beta\e_j)}$ be two admissible labeled polynomials in $B$. We say $g^{(x^\beta\e_j)} < f^{(x^\alpha\e_i)}$, if $g^{(x^\beta\e_j)}$ is added to $B$ {\em later} than $f^{(x^\alpha\e_i)}$. We assume admissible labeled polynomials are added to $B$ one by one, so the order ``$<$" on $B$ is a total order.

\begin{define}[F5-reducible]
Let $f^{(x^\alpha\e_i)}$ be an admissible labeled polynomial and $B$ be a set of admissible labeled polynomials. We say $f^{(x^\alpha\e_i)}$ is {\bf F5-reducible} by $B$, if there exists $g^{(x^\beta\e_j)}\in B$ with $g\not=0$ such that (1) $\lpp(g)$ divides $\lpp(f)$,  denote $t := \lpp(f)/\lpp(g)$, (2) $x^\alpha\e_i \succ tx^\beta\e_j$, and (3) $tg^{(x^\beta\e_j)}$ is neither F5-divisible nor F5-rewritable by $B$.
\end{define}

Given an admissible labeled polynomial $f^{(x^\alpha\e_i)}$ and a set of admissible labeled polynomials $B$, if $f^{(x^\alpha\e_i)}$ is F5-reducible by some $g^{(x^\beta\e_j)}$ in $B$, then we say $f^{(x^\alpha\e_i)}$ one-step-F5-reduces to $(f - c t g)^{(x^\alpha\e_i)}$ by $B$, where $c=\lc(f)/\lc(g)$ and $t=\lpp(f)/\lpp(g)$. If $(f - c t g)^{(x^\alpha\e_i)}$ is still F5-reducible by $B$, we can repeat the above one-step-F5-reduction. We say {\bf $f^{(x^\alpha\e_i)}$ F5-reduces} to $h^{(x^\alpha\e_i)}$ by $B$, if $h^{(x^\alpha\e_i)}$ is obtained by several one-step-F5-reductions from $f^{(x^\alpha\e_i)}$, and $h^{(x^\alpha\e_i)}$ is not F5-reducible by $B$. Based on this reduction procedure, we have the following proposition directly.

\begin{prop} \label{prop_f5reduce}
Let $f^{(x^\alpha\e_i)}$ be an admissible labeled polynomial and $B$ be a set of admissible labeled polynomials. If $f^{(x^\alpha\e_i)}$ F5-reduces to $h^{(x^\alpha\e_i)}$ by $B$, then $h^{(x^\alpha\e_i)}$ is also an admissible labeled polynomial.
\end{prop}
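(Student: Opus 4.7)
The plan is to proceed by induction on the number of one-step F5-reductions used to pass from $f^{(x^\alpha\e_i)}$ to $h^{(x^\alpha\e_i)}$. The base case (zero reductions) is trivial, so the whole content is the inductive step: a single one-step F5-reduction preserves admissibility. Once that is established, iterating through all the reductions in the chain yields the result.

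For the inductive step, suppose $f^{(x^\alpha\e_i)}$ one-step-F5-reduces via some $g^{(x^\beta\e_j)}\in B$ to $(f-ctg)^{(x^\alpha\e_i)}$, where $c=\lc(f)/\lc(g)$ and $t=\lpp(f)/\lpp(g)$. I would unpack admissibility on both sides: since $f^{(x^\alpha\e_i)}$ is admissible there is some $\u\in R^m$ with $f=\u\cdot\f$ and $\lpp(\u)=x^\alpha\e_i$; since $g^{(x^\beta\e_j)}\in B$ is admissible there is some $\v\in R^m$ with $g=\v\cdot\f$ and $\lpp(\v)=x^\beta\e_j$. Setting $\w:=\u-ct\v$, one immediately has $f-ctg=\w\cdot\f$, so it remains only to verify $\lpp(\w)=x^\alpha\e_i$.

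The key observation is that condition (2) in the definition of F5-reducibility, namely $x^\alpha\e_i\succ tx^\beta\e_j$, is precisely the signature comparison I need. Indeed $\lpp(ct\v)=tx^\beta\e_j\prec x^\alpha\e_i=\lpp(\u)$, so every term appearing in $ct\v$ is strictly less than $\lpp(\u)$ in the order $\prec_2$, and therefore $\lpp(\w)=\lpp(\u-ct\v)=\lpp(\u)=x^\alpha\e_i$. This exhibits the required representation vector for $(f-ctg)^{(x^\alpha\e_i)}$, proving it admissible. Applying this step-by-step along the chain of one-step reductions shows $h^{(x^\alpha\e_i)}$ is admissible.

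There is no real obstacle here; the only subtle point is recognising that conditions (1) and (3) in the definition of F5-reducibility (the divisibility condition and the non-F5-divisibility / non-F5-rewritability condition on $tg^{(x^\beta\e_j)}$) play no role in the admissibility argument — they govern correctness and efficiency of the F5 algorithm elsewhere, but the preservation of admissibility rests entirely on the signature inequality in condition (2). I would make this observation explicit in the write-up, since it clarifies why admissibility is exactly the invariant that matches the F5-reduction scheme.
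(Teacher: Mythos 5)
Your proof is correct and is essentially the argument the paper intends: the paper states this proposition without proof as a direct consequence of the reduction procedure, the key point being exactly your observation that condition (2) of F5-reducibility gives $\lpp(ct\v)=t x^\beta\e_j\prec x^\alpha\e_i=\lpp(\u)$, so the representation vector $\u-ct\v$ retains signature $x^\alpha\e_i$ at each one-step reduction. Nothing further is needed.
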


Let $f^{(x^\alpha\e_i)}, g^{(x^\beta\e_j)}$ be two admissible labeled polynomials in $B$ such that $f$ and $g$ are both nonzero. \footnote{We do not care about the critical pairs when either $f$ or $g$ is zero, since these critical pairs make no senses in both practical implementation and theoretical proofs.} Let $t := \lcm(\lpp(f), \lpp(g))$, $t_f := t/\lpp(f)$ and $t_g := t/\lpp(g)$. Then the 4-tuple vector $(t_f, f^{(x^\alpha\e_i)}, t_g, g^{(x^\beta\e_j)})$ is called a {\bf critical pair} of $f^{(x^\alpha\e_i)}$ and $g^{(x^\beta\e_j)}$, if one of the following conditions holds: (1) $t_f x^\alpha\e_i \succ t_g x^\beta\e_j$. And (2) $t_f x^\alpha\e_i = t_g x^\beta\e_j$, and $g^{(x^\beta\e_j)} < f^{(x^\alpha\e_i)}$, i.e. $g^{(x^\beta\e_j)}$ is added to $B$ later than $f^{(x^\alpha\e_i)}$.

Remark that the original F5 does not consider the critical pair of $f^{(x^\alpha\e_i)}$ and $g^{(x^\beta\e_j)}$ if $t_f x^\alpha\e_i = t_g x^\beta\e_j$. We expand the definition of critical pairs here just for theoretical proving. Besides, Proposition \ref{prop_regular} shows that the critical pair $(t_f, f^{(x^\alpha\e_i)}, t_g, g^{(x^\beta\e_j)})$ can be rejected by Rewritten Criterion if $t_f x^\alpha\e_i = t_g x^\beta\e_j$ holds. So the above new definition of critical pairs makes no difference from the original definition in \citep{Fau02}.

For convenience, we say a {\bf critical pair  $(t_f, f^{(x^\alpha\e_i)}, t_g, g^{(x^\beta\e_j)})$ is F5-divisible/F5-rewritable} by $B$ if {either} $t_f(f^{(x^\alpha\e_i)})$ {or} $t_g(g^{(x^\beta\e_j)})$ is F5-divisible/F5-rewritable by $B$. We say $(t_f, f^{(x^\alpha\e_i)}, t_g, g^{(x^\beta\e_j)})$ is a critical pair of $B$, if both $f^{(x^\alpha\e_i)}$ and $g^{(x^\beta\e_j)}$ are in $B$.

\begin{prop} \label{prop_regular}
Let $B$ be a set of admissible labeled polynomials, and $(t_f, f^{(x^\alpha\e_i)}, t_g, g^{(x^\beta\e_j)})$ be a critical pair of $B$. The critical pair $(t_f, f^{(x^\alpha\e_i)}, t_g, g^{(x^\beta\e_j)})$ is F5-rewritable by $B$ if $t_f x^\alpha\e_i = t_g x^\beta\e_j$. Besides, $(f - c t_g g)^{(t_f x^\alpha\e_i)}$ is an admissible labeled polynomial if $t_f x^\alpha\e_i \succ t_g x^\beta\e_j$ where $c=\lc(f)/\lc(g)$.
\end{prop}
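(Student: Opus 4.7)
The statement consists of two independent claims, which I would prove separately.

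For the first claim, suppose $t_f x^\alpha\e_i = t_g x^\beta\e_j$. The plan is to unfold the critical-pair and F5-rewritability definitions and exhibit an explicit rewriter. Equality of those terms in $R^m$ forces $i=j$ and $t_f x^\alpha = t_g x^\beta$. By the second clause of the critical-pair definition, equality of the two signature terms can only occur when $g^{(x^\beta\e_j)} < f^{(x^\alpha\e_i)}$, i.e.\ when $g^{(x^\beta\e_j)}$ was inserted into $B$ strictly after $f^{(x^\alpha\e_i)}$. To check F5-rewritability of the critical pair it suffices to witness F5-rewritability of one of its two multiples; I would test $t_f(f^{(x^\alpha\e_i)}) = (t_f f)^{(t_f x^\alpha\e_i)}$ and take the other element of the pair, $g^{(x^\beta\e_j)}$ itself, as the rewriter. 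The component indices agree ($j=i$), the divisibility $x^\beta \mid t_f x^\alpha$ follows directly from $t_g x^\beta = t_f x^\alpha$, and the ``added later'' requirement is exactly the hypothesis guaranteed by the critical-pair definition.

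For the second claim, suppose $t_f x^\alpha\e_i \succ t_g x^\beta\e_j$. The plan is to build an explicit syzygy witness by combining those granted by admissibility of $f^{(x^\alpha\e_i)}$ and $g^{(x^\beta\e_j)}$. Pick $\u_f, \u_g \in R^m$ with $\u_f \cdot \f = f$, $\u_g \cdot \f = g$, $\lpp(\u_f) = x^\alpha\e_i$ and $\lpp(\u_g) = x^\beta\e_j$, and set $\u := t_f \u_f - c\, t_g \u_g$ with the same scalar $c = \lc(f)/\lc(g)$ used to form the S-polynomial. Then $\u \cdot \f$ reproduces the polynomial component of the intended labeled S-polynomial. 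The leading power products of the two summands of $\u$ are $t_f x^\alpha\e_i$ and $t_g x^\beta\e_j$ respectively; under the strict inequality hypothesis there is no cancellation in $R^m$, so $\lpp(\u) = t_f x^\alpha\e_i$, which is exactly what admissibility of the labeled S-polynomial demands.

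Both parts are essentially bookkeeping with the module order; the only point worth emphasising is that this paper widens the original F5 definition of critical pair to include the degenerate equality case, and Part 1 is precisely the justification that this widening is harmless, since Rewritten Criterion will always discard the newly admitted pairs. I do not anticipate any genuine obstacle; the only thing to be careful about is making sure the comparison of leading terms in Part 2 is done in the module order $\prec_2$ and not in the polynomial order, which is guaranteed by the \emph{position over term} extension convention stated at the outset.
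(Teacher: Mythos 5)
Your proposal is correct and follows essentially the same route as the paper: for the equality case you invoke the second clause of the critical-pair definition to get $g^{(x^\beta\e_j)} < f^{(x^\alpha\e_i)}$ and use $g^{(x^\beta\e_j)}$ itself as the rewriter of $t_f(f^{(x^\alpha\e_i)})$, and for the strict-inequality case you form $\u = t_f\u_f - c\,t_g\u_g$ from the admissibility witnesses and note that no cancellation of leading terms in $R^m$ can occur, exactly as in the paper's proof. The extra bookkeeping you supply (checking $i=j$ and the divisibility $x^\beta \mid t_f x^\alpha$) only makes explicit what the paper leaves implicit.
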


\begin{proof}
If $t_f x^\alpha\e_i = t_g x^\beta\e_j$ holds, then $g^{(x^\beta\e_j)}$ is added to $B$ later than $f^{(x^\alpha\e_i)}$ by the definition of critical pairs, and hence, $t_ff^{(x^\alpha\e_i)}$ is F5-rewritable by $g^{(x^\beta\e_j)}\in B$.

Since $f^{(x^\alpha\e_i)}$ and $g^{(x^\beta\e_j)}$ are admissible labeled polynomials, there exist $\fu , \gv \in I$ such that $\lpp(\u) = x^\alpha\e_i$ and $\lpp(\v) = x^\beta\e_j$. Consider the polynomial $t_f(\fu)  - ct_g (\gv)  = (f - c t_g g)^{[t_f\u - c t_g \v]} \in I$ where $c=\lc(f)/\lc(g)$, and clearly, we have $\lpp(t_f\u - c t_g \v) = t_f x^\alpha\e_i$ if $t_f x^\alpha\e_i \succ t_g x^\beta\e_j$, so $(f - c t_g g)^{(t_f x^\alpha\e_i)}$ is an admissible labeled polynomial.
\end{proof}

{\em F5 returns a set of admissible labeled polynomials}, which has been proved in many papers, including \citep{Eder09, Ars09}. So we use this fact and omit detailed proofs in this paper.

The following proposition is very interesting and very important.

\begin{prop} \label{prop_f5_important}
Let $S$ be a finite set of admissible labeled polynomials returned by F5. If $(t_f, f^{(x^\alpha\e_i)}$, $t_g, g^{(x^\beta\e_j)})$ is a critical pair of $S$, then $(t_f, f^{(x^\alpha\e_i)}, t_g, g^{(x^\beta\e_j)})$ is either F5-divisible or F5-rewritable by $S$.
\end{prop}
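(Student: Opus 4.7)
The plan is to track how F5 handled this critical pair during its execution and invoke a monotonicity argument. First, the degenerate case $t_f x^\alpha\e_i = t_g x^\beta\e_j$ is already settled by Proposition \ref{prop_regular}, so I would assume $t_f x^\alpha\e_i \succ t_g x^\beta\e_j$ throughout the rest of the argument.

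Next I would record a short monotonicity lemma: both F5-divisibility and F5-rewritability are monotone in $B$. That is, if $t(f^{(x^\alpha\e_i)})$ satisfies one of these criteria with respect to $B$, the same holds for any $B' \supseteq B$. This is immediate from the definitions, noting that F5 only appends to $B$, so the ``added later than'' order on any intermediate state restricts consistently from the order on the final output $S$. In particular, it suffices to exhibit some earlier state of the F5 computation in which the criterion already holds.

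The bulk of the proof is then a case analysis on what F5 did with this pair. By construction F5 queues up a critical pair for every pair of elements that ends up in $B$, so this pair was eventually selected. Either (i) it was discarded at selection time because $t_f(f^{(x^\alpha\e_i)})$ or $t_g(g^{(x^\beta\e_j)})$ was F5-divisible or F5-rewritable by the then-current $B$, from which monotonicity gives the conclusion; or (ii) F5 processed it, producing the admissible labeled S-polynomial $s^{(t_f x^\alpha\e_i)}$ guaranteed by Proposition \ref{prop_regular}, F5-reducing it to some $h^{(t_f x^\alpha\e_i)}$ via Proposition \ref{prop_f5reduce}, and appending the result to $B$. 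In the subcase $h \ne 0$, the element $h^{(t_f x^\alpha\e_i)} \in S$ was added later than $f^{(x^\alpha\e_i)}$ and its signature's power-product $t_f x^\alpha$ trivially divides itself, so the Rewritten Criterion applies and the pair is F5-rewritable by $S$.

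The main obstacle is the residual subcase of (ii) in which the S-polynomial F5-reduces to $0$, leaving no explicit rewriter in $S$. To handle this I would argue on two fronts. First, any other critical pair of $S$ sharing the signature $t_f x^\alpha\e_i$ is rejected by the Rewritten Criterion against a common designated representative, so multiplicity of pairs with one signature is harmless. Second, when $(t_f, f^{(x^\alpha\e_i)}, t_g, g^{(x^\beta\e_j)})$ is itself the unique representative and yet reduces to zero, the chain of F5-reductions yields a signature-respecting expression of $t_f f - c t_g g$ as a combination of elements of $B$, which transfers into a principal-syzygy certificate forcing F5-divisibility through the position-over-term structure of $\prec_2$. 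Making this accounting tight independently of the particular pair-selection order inside F5 is the delicate part of the proof; I expect the invariants developed in \citep{SunWang11} and \citep{SunWang10b} to carry over after only modest adaptation, as the surrounding appendix already hints.
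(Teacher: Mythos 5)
Your argument coincides with the paper's up to the point you yourself flag as the obstacle, and that obstacle is where the proposal breaks down. The paper's proof is exactly your cases (i) and (ii): pass to the intermediate set $S'$ at the moment the pair is treated, use monotonicity $S'\subset S$ if a criterion already fired, and otherwise let F5 form the S-polynomial (admissible by Proposition \ref{prop_regular}), F5-reduce it to $h^{(t_fx^\alpha\e_i)}$ (admissible by Proposition \ref{prop_f5reduce}), and observe that this element enters $S$ later than $f^{(x^\alpha\e_i)}$, so the Rewritten Criterion applies. The paper never splits on whether $h=0$: in its model of F5 the reduction result is added to $S$ together with its signature even when the polynomial part is zero (this is the abstraction of F5's rule list, which records signatures of zero reductions), and the definition of F5-rewritable, unlike F5-divisible, imposes no condition $g\neq 0$ on the rewriter. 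Hence $0^{(t_fx^\alpha\e_i)}\in S$ is itself a legitimate rewriter, $t_fx^\alpha$ divides $t_fx^\alpha$, and the pair is F5-rewritable — the ``residual subcase'' needs no separate mechanism.

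Your proposed workaround for that subcase is not only unnecessary but does not work as stated. A zero reduction at signature $t_fx^\alpha\e_i$ certifies a syzygy $\u$ with $\lpp(\u)=t_fx^\alpha\e_i$, i.e.\ $u_if_i\in\langle f_{i+1},\cdots,f_m\rangle$ with $\lpp(u_i)=t_fx^\alpha$; but F5-divisibility would require $t_fx^\alpha$ to be divisible by $\lpp(g)$ for some nonzero $g^{(x^\beta\e_j)}\in S$ with $\e_i\succ\e_j$, and $u_i$ lies only in the colon ideal $(\langle f_{i+1},\cdots,f_m\rangle : f_i)$, not in $\langle f_{i+1},\cdots,f_m\rangle$ itself. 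For non-regular input sequences this divisibility genuinely fails — that is precisely why F5 needs the Rewritten Criterion fed by recorded zero reductions rather than the Syzygy Criterion alone. So the ``principal-syzygy certificate forcing F5-divisibility'' cannot be made rigorous, and deferring the accounting to \citep{SunWang11} and \citep{SunWang10b} leaves the key subcase unproved. The fix is simply to use the paper's convention that $S$ contains the labeled output of every processed pair, zero or not, and to note that rewritability by a zero-labeled element is allowed by the definition.
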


\begin{proof}
According to F5, the critical pair $(t_f, f^{(x^\alpha\e_i)}, t_g, g^{(x^\beta\e_j)})$ must be considered in some loop of F5, and suppose $S'\subset S$ is the intermediate set when $(t_f, f^{(x^\alpha\e_i)}, t_g, g^{(x^\beta\e_j)})$ is being considered in that loop. Then we have $f^{(x^\alpha\e_i)}, g^{(x^\beta\e_j)}\in S'$.

If $(t_f, f^{(x^\alpha\e_i)}, t_g, g^{(x^\beta\e_j)})$ is either F5-divisible or F5-rewritable by $S'$, then it is also F5-divisible or F5-rewritable by $S$, since $S' \subset S$.

Otherwise, $(t_f, f^{(x^\alpha\e_i)}, t_g, g^{(x^\beta\e_j)})$ is neither F5-divisible nor F5-rewritable by $S'$. In this case, we have $t_f x^\alpha\e_i \succ t_g x^\beta\e_j$ and $(f - c t_g g)^{(t_f x^\alpha\e_i)}$ is an admissible labeled polynomial by Proposition \ref{prop_regular} where $c=\lc(f)/\lc(g)$. Next, according to F5, the labeled polynomial $(f - c t_g g)^{(t_f x^\alpha\e_i)}$ will be F5-reduced by $S'$. Suppose the reduction result is $h^{(t_fx^\alpha\e_i)}$. Proposition \ref{prop_f5reduce} shows $h^{(t_fx^\alpha\e_i)}$ is also an admissible labeled polynomial. Since $h^{(t_fx^\alpha\e_i)}$ is not F5-reducible by $S'$, the labeled polynomial $h^{(t_fx^\alpha\e_i)}$ will be added to $S'$, which means $h^{(t_fx^\alpha\e_i)}\in S$. As $f^{(x^\alpha\e_i)}$ is already in $S'$, then $h^{(t_fx^\alpha\e_i)}$ is added to $S$ later than $f^{(x^\alpha\e_i)}$. So $t_f(f^{(x^\alpha\e_i)})$, and hence $(t_f, f^{(x^\alpha\e_i)}, t_g, g^{(x^\beta\e_j)})$, is F5-rewritable by $h^{(t_fx^\alpha\e_i)}\in S$.
\end{proof}

Combined with Proposition \ref{prop_f5_important}, the following theorem shows F5 computes a signature-labeled Gr\"obner basis.

\begin{theorem} \label{thm_f5}
Let $S$ be a finite set of admissible labeled polynomials. The set $S$ is a signature-labeled Gr\"obner basis for $I$, if both the following two conditions hold.

\begin{enumerate}
\item $\{f_1^{(\e_1)}, \cdots, f_m^{(\e_1)}\}$ is a subset of $S$ and $f_i^{(\e_i)}$ is added to $S$ earlier than any $g^{(x^\beta\e_j)} \in S\setminus \{f_1^{(\e_1)}, \cdots, f_m^{(\e_1)}\}$.

\item For any critical pair $(t_f, f^{(x^\alpha\e_i)}, t_g, g^{(x^\beta\e_j)})$ of $S$, the critical pair $(t_f, f^{(x^\alpha\e_i)}, t_g, g^{(x^\beta\e_j)})$ is either F5-divisible or F5-rewritable by $S$.
\end{enumerate}
\end{theorem}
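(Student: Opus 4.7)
The plan is to verify that $S$ is a signature-labeled Gröbner basis by first using the admissibility of each $g_i^{(\t_i)} \in S$ to produce a vector $\v_i \in R^m$ with $g_i = \v_i \cdot \f$ and $\lpp(\v_i) = \t_i$, and then showing that the associated collection $\{g_1^{[\v_1]}, \ldots, g_s^{[\v_s]}\}$ is a full-labeled Gröbner basis for $I$. The vectors $\v_i$ exist by the very definition of admissible labeled polynomial, so all the substance lies in the second step.

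I would carry this out by contradiction, exploiting that $\prec_2$ well-orders the monomials of $R^m$. Suppose some $\fu \in I$ with $f \neq 0$ violates the full-labeled condition. Pick $(f, \u)$ so that the signature $\lpp(\u) = x^\alpha \e_i$ is minimum among all counterexamples, and subject to that, take $\lpp(\u)$ also minimum over all $\u' \in R^m$ with $\u' \cdot \f = f$. Writing $\u = c\, x^\alpha \e_i + \u_0$ with $c = \lc(\u)$ and $\lpp(\u_0) \prec x^\alpha \e_i$ gives the decomposition $f = c\, x^\alpha f_i + h$ where $h = \u_0 \cdot \f$. If $h = 0$ or $\lpp(f) = x^\alpha \lpp(f_i)$, then $f_i^{(\e_i)} \in S$ (guaranteed by condition (1)) directly witnesses the full-labeled condition for $(f, \u)$, a contradiction. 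If $\lpp(f) = \lpp(h)$ and $\lpp(h) \neq x^\alpha \lpp(f_i)$, then signature minimality applied to the strictly smaller-signature pair $(h, \u_0)$ yields a witness $g^{(\t)} \in S$, which also witnesses $(f, \u)$ because its signature bound is strictly below $x^\alpha \e_i$.

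The remaining and most delicate case is the cancellation case: $\lpp(h) = x^\alpha \lpp(f_i)$ with leading coefficients cancelling so that $\lpp(f)$ is strictly smaller than both. Here I would form a critical pair between the witness $g^{(\t)}$ of $(h, \u_0)$ and $f_i^{(\e_i)}$, whose leading terms both divide the common value $x^\alpha \lpp(f_i)$, and invoke condition (2) to force the pair to be F5-divisible or F5-rewritable by $S$. Tracing the resulting rewrite/divisibility chain should produce either a new element of $S$ that directly witnesses $(f, \u)$, or a representation of $f$ with strictly smaller signature, contradicting the minimality of $x^\alpha \e_i$.

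The main obstacle I anticipate is precisely this cancellation case: since $\lpp(f)$ is not divided in an obvious way by the leading terms of the elements naturally available, the desired $S$-witness must be produced by iterating the F5-rewrite/divisibility chain, and each step must preserve the signature bound required to witness $(f, \u)$. Condition (1) is used essentially here: because $f_i^{(\e_i)}$ is the earliest element of $S$ with signature in position $\e_i$, any F5-rewrite of a multiple of $f_i^{(\e_i)}$ must itself be an element added to $S$ strictly later, and this later element provides the needed replacement witness. I expect the bookkeeping here to parallel the correctness proofs of F5 developed in \citep{SunWang10b, SunWang11}, adapted to the full-labeled / signature-labeled Gröbner basis formulation of this paper.
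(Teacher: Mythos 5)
There is a genuine gap, and it sits exactly where you park it: the ``cancellation case.'' Your plan is a single induction on the minimal signature of a counterexample to the full-labeled property, and in the cancellation case you invoke condition (2) on a critical pair and then assert that ``tracing the resulting rewrite/divisibility chain should produce either a new element of $S$ that directly witnesses $(f,\u)$, or a representation of $f$ with strictly smaller signature.'' This is precisely the content that has to be proved, and the two mechanisms you gesture at do not deliver it as stated. First, F5-rewritability only constrains \emph{signatures}: the rewriter $g^{(x^\beta\e_i)}$ added later satisfies $x^\beta \mid x^\alpha$, but nothing forces $\lpp(g)$ to divide $\lpp(f)$ (which, after the cancellation, is strictly below $x^\alpha\lpp(f_i)$), so ``this later element provides the needed replacement witness'' is unjustified; one must instead rewrite the representation using that element and iterate, and the chain of rewriters one is led to has the \emph{same} top signature $x^\alpha\e_i$, so signature-minimality gives no inductive leverage on it. This is why the paper does not induct on signatures of module elements but on a four-level total order on critical pairs (signature of the pair, then insertion time of the first component, then the second component's multiplied signature, then its insertion time), and why it needs the standard-representation machinery (Lemmas \ref{lem_stdrepresentation} and \ref{lem_correctness}) together with the terminating-chain and ``Second fact'' arguments inside Claim 2 to convert rewritability into a statement about leading terms of the polynomial parts.

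Second, your sketch never uses the mechanism by which F5-divisibility is exploited: when a multiple $t_f(f^{(x^\alpha\e_i)})$ is F5-divisible by some $h^{(x^\beta\e_j)}$ with $\e_j \prec \e_i$, the paper lowers the signature of the representation by subtracting a multiple of the principal syzygy $h(f_i^{[\e_i]}) - f_i(h^{[\w]})$, whose module leading term $\lpp(h)\e_i$ divides the signature in question; without this step the Syzygy Criterion gives you no way to produce either a witness or a smaller-signature representation. So while your overall architecture (reduce to building a full-labeled basis, then argue by a minimal counterexample) is a legitimate alternative to the paper's route through Theorem \ref{thm_main}, the argument as proposed would stall in the cancellation case: you would need to import, essentially verbatim, the paper's finer ordering on critical pairs, the standard-representation lemmas, and the syzygy-lowering trick, at which point the proof is no longer the simple signature induction you describe.
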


By Proposition \ref{prop_gb} and the definition of signature-labeled \gr basis, if $S$ is a signature-labeled \gr basis for $I$, then the set $\{f \mid f^{(x^\alpha \e_j)} \in S\}$ is a \gr basis for the ideal $I$, which indicates the correctness of F5.

We will prove Theorem \ref{thm_f5} in the next two subsections. We first rewrite some notations in Subsection \ref{subsec_f5_express}, and then prove an equivalent theorem (Theorem \ref{thm_main}) in Subsection \ref{subsec_f5_proof}.

\subsection{Rewrite Theorem \ref{thm_f5}} \label{subsec_f5_express}

If $f^{(x^\alpha\e_i)}$ is an admissible labeled polynomial, then there exists $\fu \in I$ such that $\lpp(\u) = x^\alpha\e_i$. So we can expand the definitions of F5-divisible, F5-rewritable and critical pairs below. Let $\fu, \gv\in I$ be two polynomials in a set $B$, again, we say $\gv < \fu $ if $\gv $ is added to $B$ {\em later} than $\fu $.

\begin{define}
Let $B$ be a subset of $I$, $\fu$ be a polynomial in $I$ and $t$ be a power product in $R$.
\begin{enumerate}
\item Suppose $\lpp(\u) = x^\alpha \e_i$. We say $t(\fu)$ is {\bf F5-divisible} by $B$, if there exists $\gv \in B$ with $\lpp(\v)=x^\beta \e_j$ and $g\not = 0$, such that $\lpp(g)$ divides $tx^\alpha$ and $\e_i \succ \e_j$.

\item We say $t(\fu) $ is {\bf F5-rewritable} by $B$, if there exists $\gv \in B$ such that $\lpp(\v)$ divides $\lpp(t\u)$, and $\gv  < \fu $ i.e. $\gv $ is added to $B$ later than $\fu $.
\end{enumerate}
\end{define}

Similarly, suppose $\fu, \gv \in B\subset I$ are two polynomials with $f$ and $g$ both nonzero. Let $t:=\lcm(\lpp(f), \lpp(g))$, $t_f:=t/\lpp(f)$ and $t_g:=t/\lpp(g)$. Then the 4-tuple vector $(t_f, \fu, t_g, \gv)$ is called a {\bf critical pair} of $\fu$ and $\gv$, if one of the following conditions holds: (1) $\lpp(t_f\u) \succ \lpp(t_g\v)$. And (2) $\lpp(t_f\u) = \lpp(t_g\v)$, and $\gv < \fu $, i.e. $\gv $ is added to $B$ later than $\fu$. We also denote the critical pair of $\fu$ and $\gv$ by $[\fu, \gv]$ or $[\gv, \fu]$ for short. The corresponding {\bf S-polynomial} of $(t_f, \fu, t_g, \gv)$ is $t_f(\fu) -c t_g (\gv) $ where $c=\lc(f)/\lc(g)$. Similarly, the critical pair $(t_f, \fu, t_g, \gv)$ will be rejected by Rewritten Criterion if $\lpp(t_f\u) = \lpp(t_g\v)$. So the S-polynomial $t_f(\fu) -ct_g(\gv) $ is only considered when $\lpp(t_f\u) \succ \lpp(t_g\v)$. We also say a {\bf critical pair  $(t_f, \fu, t_g, \gv)$ is F5-divisible/F5-rewritable} by $B$ if {either} $t_f(\fu) $ {or} $t_g(\gv) $ is F5-divisible/F5-rewritable by $B$, and $(t_f, \fu, t_g, \gv)$ is a critical pair of $B$ if both $\fu $ and $\gv $ are in $B$.



Then the following theorem is an equivalent version of Theorem \ref{thm_f5}.

\begin{theorem} \label{thm_main}
Let $G$ be a finite subset of $I$. The set $G$ is a full-labeled \gr basis for $I$, if both the following two conditions hold.

\begin{enumerate}
\item $\{f_1^{[\e_1]}, \cdots, f_m^{[\e_1]}\}$ is a subset of $G$ and $f_i^{[\e_i]}$ is added to $G$ earlier than any $\gv  \in G \setminus \{f_1^{[\e_1]}, \cdots, f_m^{[\e_1]}\}$.

\item For any critical pair $[\fu, \gv]$ of $G$, the critical pair $[\fu, \gv]$ is either F5-divisible or F5-rewritable by $G$.
\end{enumerate}
\end{theorem}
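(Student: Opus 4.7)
My plan is a proof by contradiction using well-founded descent on the signature order. Call $\fu \in I$ with $f \neq 0$ a \emph{counterexample} if no $\gv \in G$ satisfies both $\lpp(g)\mid\lpp(f)$ and $\lpp(t\v)\preceq\lpp(\u)$ for $t:=\lpp(f)/\lpp(g)$. Assuming some counterexample exists, I fix one with $\lpp(\u)=x^\alpha\e_i$ minimal under $\prec_2$ and, among those, with $\lpp(f)$ minimal. Hypothesis~(1) gives $f_i^{[\e_i]}\in G$, and since $\e_i\mid x^\alpha\e_i$ the set of $\gv\in G$ with $\lpp(\v)\mid\lpp(\u)$ is nonempty; I will pick such a $\gv$ that was added to $G$ as late as possible, an essential ingredient for the rewritability analysis below. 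If every such candidate has $g=0$, then $\v$ is a syzygy and subtracting a suitable monomial multiple of it from $\u$ yields a strictly lower-signature representation of $f$, contradicting minimality; so I may assume $g\neq 0$.

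Set $t:=\lpp(\u)/\lpp(\v)$, $c:=\lc(\u)/\lc(\v)$, and form $\hw:=\fu - c\,t(\gv)\in I$, so $\lpp(\w)\prec\lpp(\u)$ by leading-term cancellation. If $h=0$ then $\gv$ itself witnesses $\fu$, a contradiction. Otherwise minimality applied to $\hw$ produces $\gv'\in G$ with $\lpp(g')\mid\lpp(h)$ and $\lpp(t'\v')\preceq\lpp(\w)\prec\lpp(\u)$, where $t':=\lpp(h)/\lpp(g')$. I case-split on $\lpp(h)$ vs.\ $\lpp(f)$: if $\lpp(h)=\lpp(f)$ then $\gv'$ directly witnesses $\fu$; if $\lpp(h)\prec\lpp(f)$ then leading terms of $f$ and $ctg$ must cancel, forcing $\lpp(g)\mid\lpp(f)$ with quotient $t$, so $\gv$ witnesses $\fu$. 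Both contradict the counterexample property, leaving only $\lpp(h)\succ\lpp(f)$.

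In this remaining case, $\lpp(h)=t\lpp(g)$ and $\lpp(g')\mid t\lpp(g)$. Setting $\ell:=\lcm(\lpp(g),\lpp(g'))$, $r_g:=\ell/\lpp(g)$, $r_{g'}:=\ell/\lpp(g')$ and writing $t=s\,r_g$, $t'=s\,r_{g'}$, the chain $\lpp(sr_g\v)=\lpp(\u)\succ\lpp(sr_{g'}\v')$ cancels $s$ to give $\lpp(r_g\v)\succ\lpp(r_{g'}\v')$, so $(r_g,\gv,r_{g'},\gv')$ is a genuine critical pair of $G$. Hypothesis~(2) forces it to be F5-divisible or F5-rewritable by $G$. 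F5-divisibility at $r_g\gv$ supplies some $\tilde g^{[\tilde\v]}\in G$ with $\tilde g\neq 0$, index $\tilde i>i$, and $\lpp(\tilde g)$ dividing the monomial part of $\lpp(r_g\v)$; the trivial syzygy $f_i\tilde\v-\tilde g\,\e_i$ then has leading term $\lpp(\tilde g)\e_i$ dividing $\lpp(r_g\v)$ and hence $\lpp(\u)$, and adding a suitable monomial multiple of it to $\u$ cancels $\u$'s leading term, yielding a strictly smaller-signature representation of $f$ that contradicts minimality. F5-rewritability at $r_g\gv$ produces $\hat g^{[\hat\v]}\in G$ added after $\gv$ with $\lpp(\hat\v)\mid\lpp(r_g\v)\mid\lpp(\u)$: if $\hat g\neq 0$, then $\hat g^{[\hat\v]}$ is a candidate added strictly later than $\gv$, violating the latest-added choice; if $\hat g=0$, the syzygy-splicing argument reappears.

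The main obstacle I foresee is F5-rewritability on the $\gv'$ side, where the rewriter's signature divides $r_{g'}\lpp(\v')$ but not necessarily $\lpp(\u)$, so the latest-added trick on $\gv$ does not directly apply. I plan to address this by also requiring $\gv'$ to be chosen as a latest-added valid witness for $\hw$, and then arguing that the rewriter either produces a strictly better witness for $\hw$ (contradicting that choice) or, being a syzygy, can be folded into the $\u'$-remainder of $\u=ct\v+\u'$ to lower its signature; the careful bookkeeping of this final case is the technical heart of the proof, but since $\prec_2$ is a well order on $R^m$ the descent must terminate, completing the contradiction and establishing that $G$ is a full-labeled Gr\"obner basis for $I$.
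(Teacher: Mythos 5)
Your descent on a minimal counterexample $(f,\u)$ handles several branches soundly: the zero-candidate syzygy splice, $h=0$, the cases $\lpp(h)\preceq\lpp(f)$, and the two sub-cases of hypothesis (2) that land on the $\gv$-component of the pair $(r_g,\gv,r_{g'},{g'}^{[\v']})$, where your latest-added choice of $\gv$ neatly replaces the paper's chain-of-rewriters argument. But there is a genuine gap exactly where you flag an ``obstacle'': hypothesis (2) only guarantees that the pair is F5-divisible \emph{or} F5-rewritable, and this may hold only for the component $r_{g'}({g'}^{[\v']})$, in either the divisible or the rewritable form (you do not discuss the divisible form on that side at all). Your proposed repair fails: an F5-rewriter $\hat g^{[\hat\v]}$ of $r_{g'}({g'}^{[\v']})$ is constrained only by $\lpp(\hat\v)\mid\lpp(r_{g'}\v')$ and by being added later than ${g'}^{[\v']}$; nothing controls $\lpp(\hat g)$, so $\hat g$ can be nonzero with $\lpp(\hat g)$ not dividing $\lpp(h)$, in which case it is neither a witness for $\hw$ (so choosing ${g'}^{[\v']}$ latest-added among witnesses yields no contradiction) nor a syzygy that can be folded into $\u$. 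The divisibility sub-case on that side has the same defect: the Koszul syzygy it supplies has leading term $\lpp(\tilde g)\e_{i'}$ dividing $\lpp(t'\v')\preceq\lpp(\w)$, which gives no divisor of $\lpp(\u)$ and no handle on $\lpp(f)$.

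This missing transfer---from information about the smaller-signature component of a critical pair to information about the larger one---is precisely the content of the paper's Claim 3 (together with the ``First/Second fact'' machinery inside Claim 2), and it constitutes most of the appendix proof. The paper achieves it by inducting not on signatures of counterexamples but on the critical pairs of $G$ themselves, ordered by the four comparisons (a)--(d): when the hypothesis bites on the wrong side, one manufactures a new critical pair with the same top component but strictly smaller bottom data (fashions (c),(d)), which already lies in $Done$ and hence is known to satisfy the F5-divisible/F5-rewritable property on its top side; combined with Lemma \ref{lem_stdrepresentation}, Lemma \ref{lem_correctness} and the basic property of standard representations, this eventually forces the top component to be F5-divisible or F5-rewritable. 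Your single well-founded descent on $(\lpp(\u),\lpp(f))$ provides no induction hypothesis about these auxiliary pairs, whose top signature equals $\lpp(\u)$ rather than being smaller, so the argument cannot be closed as sketched; a correct completion along your lines would essentially have to rebuild the paper's critical-pair induction and Claim 3.
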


\subsection{Proofs of Theorem \ref{thm_main}} \label{subsec_f5_proof}


Let $\fu \in I$, we say $\fu $ has a {\bf standard representation} w.r.t. a set $B\subset I$, if there exist $p_1, \cdots, p_s \in R$ and $g_1^{[\v_1]}, \cdots, g_s^{[\v_s]} \in B$ such that $$f = p_1g_1 + \cdots + p_s g_s,$$ where $\lpp(f)\succeq \lpp(p_ig_i)$ and $\lpp(\u) \succeq \lpp(p_i\v_i)$ for $i=1,\cdots, s$. Clearly, if $\fu $ has a  standard representation w.r.t. $B$, then there exists $\gv \in B$ such that $\lpp(g)$ divides $\lpp(f)$ and $\lpp(\u) \succeq \lpp(t\v)$ where $t=\lpp(f)/\lpp(g)$. We call this fact to be the {\bf basic property of standard representations}.

The following two lemmas are exactly the same as Lemma 3.1 and Lemma 3.2 in \citep{SunWang11} where one can find the detailed proofs.

\begin{lemma} \label{lem_stdrepresentation}
Let $G$ be a finite subset of $I$ and $\{f_1^{[\e_1]},$ $\cdots, f_m^{[\e_m]}\}\subset G$. For a polynomial $\fu \in I$, $\fu$ has a standard representation w.r.t. $G$, if for any critical pair $[\gv, \hw]=(t_g, \gv, t_h, \hw)$ of $G$ with $\lpp(\u) \succeq \lpp(t_g \v)$, the S-polynomial of $[\gv, \hw]$  always has a standard representation w.r.t. $G$.
\end{lemma}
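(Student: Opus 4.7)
The plan is a double minimality argument by contradiction. Since $\{f_1^{[\e_1]},\dots,f_m^{[\e_m]}\}\subset G$ and $f\in I$, at least one expression $f=\sum_i p_i g_i$ with $g_i^{[\v_i]}\in G$ and $p_i\in R$ exists, and such an expression automatically satisfies $\u=\sum_i p_i \v_i$. Among all such expressions I would pick one that first minimizes the signature bound $T:=\max_i \lpp(p_i\v_i)$ under $\prec_2$ and, subject to that, minimizes the polynomial bound $T':=\max_i \lpp(p_i g_i)$ under $\prec_1$; both minima exist because the term orders are well-orders. The lemma is equivalent to showing $T\preceq\lpp(\u)$ and $T'\preceq\lpp(f)$ for this representation, since a standard representation is exactly one satisfying both bounds.

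For the signature side, suppose $T\succ\lpp(\u)$. Because $\u=\sum_i p_i\v_i$, the top monomial contributions in $R^m$ cancel, so the set $S:=\{i:\lpp(p_i\v_i)=T\}$ has $|S|\geq 2$, with the involved $g_i^{[\v_i]}$ having $\v_i$ of the same leading module position. A Buchberger-style telescoping, grouping $\sum_{i\in S}\lc(p_i)\lpp(p_i) g_i$ into constant multiples of monomial shifts of S-polynomials of critical pairs $[g_i^{[\v_i]},g_j^{[\v_j]}]$, replaces the top-signature contribution. Each such critical pair $(t_g,g^{[\v]},t_h,h^{[\w]})$ arising in the telescoping satisfies $\lpp(t_g\v)\preceq T$, and since $T\succ\lpp(\u)$ I would show, by inspection of the multipliers, that in fact the hypothesis $\lpp(\u)\succeq\lpp(t_g\v)$ applies (or these pairs can be ignored because they are absorbed into lower-signature terms). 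The hypothesis then provides a standard representation of each S-polynomial w.r.t.\ $G$. Substituting yields a new representation of $\fu$ with strictly smaller signature bound, contradicting the minimality of $T$.

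With $T\preceq\lpp(\u)$ established, assume for contradiction that $T'\succ\lpp(f)$. Now cancellation occurs in the polynomial $f=\sum_i p_i g_i$ at the top power product $T'$, so the index set $S':=\{i:\lpp(p_i g_i)=T'\}$ has $|S'|\geq 2$. I would apply the same telescoping trick, this time on the polynomial side, rewriting the top-$T'$ contribution as a combination of monomial shifts of S-polynomials of critical pairs of $G$. The key control is that the multipliers used come from the same $\lpp(p_i)$'s already present, so every S-polynomial invoked carries a critical-pair signature bounded by $T\preceq\lpp(\u)$; hence the hypothesis again supplies a standard representation of each S-polynomial. Substituting yields a representation of $\fu$ with strictly smaller $T'$ and with $T$ not increased, contradicting the second-level minimality.

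The main obstacle I expect is the bookkeeping in the telescoping, especially in the second step: one must verify that when rewriting to reduce $T'$, the invoked S-polynomials genuinely have critical-pair signatures $\preceq T$, so that the hypothesis applies, and that the substituted standard representations do not inflate either bound. This is the technical crux — keeping $\prec_1$ and $\prec_2$ aligned through the rewriting — and it is precisely the interplay that signature-based algorithms are designed to exploit. I would mirror the argument of the analogous lemma in \citep{SunWang11}, carefully tracking the monomial shifts to ensure both invariants of the minimal representation are respected.
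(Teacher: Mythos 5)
There is a genuine gap, and it sits in your first step. The claim that an arbitrary expression $f=\sum_i p_i g_i$ with $g_i^{[\v_i]}\in G$ ``automatically satisfies $\u=\sum_i p_i\v_i$'' is false: the map $R^m\rightarrow I$ has nontrivial syzygies, so $\sum_i p_i\v_i$ only satisfies $(\sum_i p_i\v_i)\cdot\f=f$ and need not equal $\u$; indeed the definition of standard representation asks only for the bound $\lpp(\u)\succeq\lpp(p_i\v_i)$, not for an identity in $R^m$. Without that identity, your inference that a minimal signature bound $T\succ\lpp(\u)$ forces cancellation among the top module terms (hence $|S|\geq 2$) collapses. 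Worse, even granting such cancellation, the objects produced by a module-side telescoping at level $T$ have shifted signatures equal to $T\succ\lpp(\u)$ and are syzygy-type combinations, not S-polynomials of critical pairs in this paper's sense (critical pairs are formed from $\lcm(\lpp(g),\lpp(h))$ of the \emph{polynomial} parts); in any case the hypothesis only covers pairs with $\lpp(\u)\succeq\lpp(t_g\v)$, which is precisely what fails at level $T$. So the hoped-for ``inspection of the multipliers'' cannot rescue this step.

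Fortunately that step is unnecessary, and the repair uses exactly the assumption you under-use: since $\{f_1^{[\e_1]},\cdots,f_m^{[\e_m]}\}\subset G$ and $f=\u\cdot\f$ with $\u=(u_1,\cdots,u_m)$, the expression $f=\sum_i u_i f_i$ is already a representation by elements of $G$ whose signature bound is $\max_i\lpp(u_i\e_i)=\lpp(\u)$ (distinct components cannot cancel). One then minimizes only the polynomial bound $T'=\max_i\lpp(p_ig_i)$ over representations whose signature bound is $\preceq\lpp(\u)$. If $T'\succ\lpp(f)$, at least two indices attain $T'$; for such $i,j$ the monomial $\lcm(\lpp(g_i),\lpp(g_j))$ divides $T'$, and with $\tau:=T'/\lcm(\lpp(g_i),\lpp(g_j))$ one has $\tau\, t_{g_i}\lpp(\v_i)=\lpp(p_i\v_i)\preceq\lpp(\u)$, hence $\lpp(t_{g_i}\v_i)\preceq\lpp(\u)$ (and likewise for $j$), so the hypothesis applies to the critical pair and supplies a standard representation of its S-polynomial; substituting $\tau$ times that representation lowers $T'$ (or the number of terms attaining it) without raising the signature bound, and a Noetherian induction concludes. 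This is essentially your second step, which is sound once the first is replaced by the trivial initial representation; it is also the argument of Lemma 3.1 of the cited Sun--Wang paper, to which the present paper defers for the detailed proof.
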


\begin{lemma} \label{lem_correctness}
Let $G$ be a finite subset of $I$ and $\{f_1^{[\e_1]},$ $\cdots, f_m^{[\e_m]}\}\subset G$. Then $G$ is a full-labeled Gr\"obner basis for $I$, if for any critical pair $[\fu, \gv]$ of $G$, the S-polynomial of $[\fu, \gv]$   always has a standard representation w.r.t. $G$.
\end{lemma}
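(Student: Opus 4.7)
The plan is to reduce Lemma~\ref{lem_correctness} to Lemma~\ref{lem_stdrepresentation} and the \emph{basic property of standard representations} recorded just before that lemma. The observation is that the hypothesis of Lemma~\ref{lem_correctness} is strictly stronger than the hypothesis of Lemma~\ref{lem_stdrepresentation}: the former asserts that the S-polynomial of \emph{every} critical pair of $G$ has a standard representation w.r.t.\ $G$, whereas the latter only requires this for those critical pairs $(t_g,\gv,t_h,\hw)$ satisfying the additional bound $\lpp(\u)\succeq\lpp(t_g\v)$. Consequently Lemma~\ref{lem_stdrepresentation} is available for \emph{every} $\fu\in I$ simultaneously, and the rest is bookkeeping.

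First I would fix an arbitrary $\fu\in I$ with $f\neq 0$. Applying Lemma~\ref{lem_stdrepresentation} to this $\fu$ produces polynomials $p_1,\ldots,p_s\in R$ and elements $g_1^{[\v_1]},\ldots,g_s^{[\v_s]}\in G$ such that
$$f=p_1g_1+\cdots+p_sg_s,\qquad \lpp(f)\succeq\lpp(p_ig_i),\qquad \lpp(\u)\succeq\lpp(p_i\v_i)$$
for all $i$. This is exactly a standard representation of $\fu$ w.r.t.\ $G$.

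Next I would invoke the basic property of standard representations: pick an index $i$ realising the maximum, so that $\lpp(p_ig_i)=\lpp(f)$. Then $\lpp(g_i)$ divides $\lpp(f)$, and with $t:=\lpp(f)/\lpp(g_i)$ we have $\lpp(t\v_i)\preceq\lpp(p_i\v_i)\preceq\lpp(\u)$. These two inequalities are precisely the divisibility and signature conditions required by the definition of a full-labeled Gr\"obner basis, so $G$ is a full-labeled Gr\"obner basis for $I$.

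There is no real obstacle at the level of Lemma~\ref{lem_correctness} itself; the substantive content of the argument has been absorbed into Lemma~\ref{lem_stdrepresentation}, whose hypothesis is carefully tuned so that the global assumption of Lemma~\ref{lem_correctness} implies it instance-by-instance for each $\fu\in I$. The only conceptual point that deserves attention is why $\{f_1^{[\e_1]},\ldots,f_m^{[\e_m]}\}\subset G$ is needed, namely to provide a base case for the standard-representation construction inside Lemma~\ref{lem_stdrepresentation}; that hypothesis is already built into both lemmas and so carries over without further work.
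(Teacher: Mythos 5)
Your proof is correct and is essentially the paper's intended argument: the paper omits a written proof of this lemma (deferring to Lemma 3.2 of \citep{SunWang11}), but the route is exactly your reduction---the global hypothesis supplies the hypothesis of Lemma~\ref{lem_stdrepresentation} for every $\fu\in I$ with $f\neq 0$, and the basic property of standard representations then yields precisely the divisibility and signature conditions defining a full-labeled Gr\"obner basis.
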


In the proof of Theorem \ref{thm_main}, we need to compare critical pairs, so we introduce the following definitions first. Suppose $(t_f, \fu, t_g, \gv)$ and $({t}_{\brf}, \brfu, {t}_{\brg}, \brgv)$ are two critical pairs of $G$, we say $({t}_{\brf}, \brfu, {t}_{\brg}, \brgv)$ is {\bf smaller} than $(t_f, \fu, t_g, \gv]$ if one of the following conditions holds:
\begin{enumerate}

\item[(a).] $\lpp({t}_{\brf} \bru) \prec \lpp(t_f\u)$.

\item[(b).] $\lpp({t}_{\brf} \bru) = \lpp(t_f\u)$ and $\brfu < \fu$, i.e. $\brfu $ is added to $G$ later than $\fu$.

\item[(c).] $\lpp({t}_{\brf} \bru) = \lpp(t_f\u)$, $\brfu = \fu$ and $\lpp({t}_{\brg} \brv) \prec \lpp(t_g\v)$.

\item[(d).] $\lpp({t}_{\brf} \bru) = \lpp(t_f\u)$, $\brfu = \fu$, $\lpp({t}_{\brg} \brv) = \lpp(t_g\v)$ and $\brgv < \gv$, i.e. $\brgv$ is added to $G$ later than $\gv$.
\end{enumerate}
Let $D$ be a set of critical pairs. A critical pair in $D$ is said to be  {\bf minimal}  if there is no critical pair in $D$  smaller  than this critical pair. Since the order defined on the critical pairs is in fact a total order, the minimal critical pair in $D$ is {\em unique}. We can always find the minimal critical pair in $D$ if $D$ is finite.

Given a critical pair $(t_f, \fu, t_g, \gv)$, there are three possible cases, assuming $c=\lc(f)/\lc(g)$:

\begin{enumerate}

\item If $\lpp(t_f\u - ct_g\v) \not= \lpp(t_f\u)$, then we say $(t_f, \fu, t_g, \gv)$ is {\bf non-regular}.

\item If $\lpp(t_f\u - ct_g\v) = \lpp(t_f\u) = \lpp(t_g\v)$, then $(t_f, \fu, t_g, \gv)$ is called {\bf super regular}.

\item If $\lpp(t_f\u) \succ \lpp(t_g\v)$, then we call $(t_f, \fu, t_g, \gv)$ {\bf genuine regular} or {\bf regular} for short.

\end{enumerate}

Now, we give the proof for Theorem \ref{thm_main}.




\begin{proof}[Proof of Theorem \ref{thm_main}]
We will use Lemma \ref{lem_correctness} to show $G$ is a full-labeled \gr basis for $I$, so we need to consider the critical pairs of $G$. We will take the following strategy. \smallskip \\
{\bf Step 1:}  Let $Todo$ be the set of  {\em all} the critical pairs of $G$, and $Done$ be an empty set.\smallskip \\
{\bf Step 2:} Select the minimal critical pair $[\fu, \gv] = (t_f, \fu, t_g, \gv)$ in $Todo$. \smallskip \\
{\bf Step 3:} For such $[\fu, \gv]$,  we will prove both of the following facts.
\begin{enumerate}
\item[(F1).] The  S-polynomial of  $[\fu, \gv]$ has a standard representation w.r.t. $G$.

\item[(F2).] If $(t_f, \fu, t_g, \gv)$ is {\em super regular} or {\em regular}, then $t_f(\fu) $ is either F5-divisible or F5-rewritable by $G$.
\end{enumerate}
{\bf Step 4:} Move $[\fu, \gv]$ from $Todo$ to $Done$, i.e. $Todo \lla Todo \setminus \{ [\fu, \gv]\}$ and $Done \lla Done\ \cup \{ [\fu, \gv]\}$. \smallskip\smallskip \\
We can repeat {\bf Step 2, 3, 4} until $Todo$ is empty. Please note that for every critical pair in $Done$, it always has property (F1). Particularly, if this critical pair is super regular or regular, then it has both properties (F1) and (F2). When $Todo$ is empty, all the critical pairs of $G$ will lie in $Done$, and hence, all the corresponding S-polynomials  have standard representations w.r.t. $G$. Then $G$ is a full-labeled Gr\"obner basis by Lemma \ref{lem_correctness}.

{\bf Step 1, 2, 4} are trivial, so we next focus on showing the two facts in {\bf Step 3}.

Take the minimal critical pair $[\fu, \gv] = (t_f, \fu, t_g, \gv)$ in $Todo$. The second condition of Theorem \ref{thm_main} shows $[\fu, \gv]$ is either F5-divisible or F5-rewritable by $G$. Then for such $[\fu, \gv]$, it must be in one of the following cases:
\begin{enumerate}

\item[C1:] $(t_f, \fu, t_g, \gv)$ is {\em non-regular}.

\item[C2:] $(t_f, \fu, t_g, \gv)$ is {\em super regular}.

\item[C3:] $(t_f, \fu, t_g, \gv)$ is {\em regular} and $t_f(\fu) $ is {\em either} F5-divisible {\em or} F5-rewritable by $G$.

\item[C4:] $(t_f, \fu, t_g, \gv)$ is {\em regular} and $t_g(\gv) $ is {\em either} F5-divisible {\em or} F5-rewritable by $G$.
\end{enumerate}
Thus, to show the facts in {\bf Step 3}, we have two things to do: First, show (F1) holds in case {C1}; Second, show (F1) and (F2) hold in cases {C2, C3} and {C4}.

We make the following claims under the condition that $(t_f, \fu, t_g, \gv)$ is minimal in $Todo$. The proofs for these claims will be presented after the current proof.
\begin{enumerate}
\item[] {\bf Claim 1:} For any $\brfu \in I$, if $\lpp(\bru) \prec \lpp(t_f\u)$, then $\brfu$ has a standard representation w.r.t. $G$.

\item[] {\bf Claim 2:} If $(t_f, \fu, t_g, \gv)$ is super regular or regular, and $t_f(\fu) $ is F5-divisible or F5-rewritable by $G$, then the S-polynomial of $(t_f, \fu, t_g, \gv)$ has a standard representation w.r.t. $G$.

\item[] {\bf Claim 3:} If $(t_f, \fu, t_g, \gv)$ is regular and $t_g(\gv) $ is either F5-divisible or F5-rewritable by $G$, then $t_f(\fu) $ is either F5-divisible or F5-rewritable by $G$.
\end{enumerate}

Note that Claim 2 indicates that (F2) implies (F1) in the cases {C2, C3} and {C4}, so it suffices to show $t_f(\fu) $ is either F5-divisible or F5-rewritable by $G$ in the cases {C2, C3} and {C4}.

Next, we proceed for each case respectively.

{C1:} $(t_f, \fu, t_g, \gv)$ is {\em non-regular}. Consider the S-polynomial $t_f(\fu) - c t_g (\gv) = (t_f f-c t_g g)^{[t_f\u-ct_g\v]} \in I$ where $c=\lc(f)/\lc(g)$. Note that $\lpp(t_f\u-c t_g\v) \prec \lpp(t_f\u)$ by the definition of non-regular, so Claim 1 shows $(t_f f-c t_g g)^{[t_f\u-ct_g\v]}$ has a standard representation w.r.t. $G$, which proves (F1).

{C2:} $(t_f, \fu, t_g, \gv)$ is {\em super regular}, i.e. $\lpp(t_f\u - ct_g\v) = \lpp(t_f\u) = \lpp(t_g\v)$ where $c=\lc(f)/\lc(g)$. According to the definition of critical pairs, $\gv $ is added to $G$ later than $\fu $. So $t_f(\fu) $ is F5-rewritable by $\gv \in G$.

{C3}: $(t_f, \fu, t_g, \gv)$ is {\em regular} and $t_f(\fu) $ is {\em either} F5-divisible {\em or} F5-rewritable by $G$. (F2) holds naturally.

{C4}: $(t_f, \fu, t_g, \gv)$ is {\em regular} and $t_g(\gv) $ is {\em either} F5-divisible {\em or} F5-rewritable by $G$. {Claim 3} shows $t_f(\fu) $ is  F5-divisible {or} F5-rewritable by $G$ as well.

After all, the theorem is proved.
\end{proof}



We next give the proofs for the three claims appearing in the above proof.

\begin{proof}[Proof of {\bf Claim 1}]
According to the hypothesis, we have $\brfu\in I$ and $\lpp(\bru)\prec \lpp(t_f\u)$. So for any critical pair $(t_{f'}, {f'}^{[\u']}, t_{g'}, {g'}^{[\v']})$ of $G$ with $\lpp(\bru) \succeq \lpp(t_{f'} \u')$, the critical pair $(t_{f'}, {f'}^{[\u']}$, $t_{g'}, {g'}^{[\v']})$ is smaller than $(t_f, \fu, t_g, \gv)$ in fashion (a) and hence lies in $Done$, which means the S-polynomial of $(t_{f'}, {f'}^{[\u']}, t_{g'}, {g'}^{[\v']})$ has a standard representation w.r.t. $G$. Lemma \ref{lem_stdrepresentation} shows that $\brfu$ has a standard representation w.r.t. $G$.
\end{proof}

\begin{proof}[Proof of {\bf Claim 2}]
By hypothesis, we have that $(t_f, \fu, t_g, \gv)$ is minimal in $Todo$, the critical pair $(t_f, \fu, t_g, \gv)$ is super regular or regular, and $t_f(\fu) $ is {either} F5-divisible {or} F5-rewritable by $G$. Let $c := \lc(f)/\lc(g)$. Then $\brfu := t_f (\fu) - c t_g (\gv)  = (t_f f - c t_g g)^{[t_f\u - c t_g \v]} \in I$ is the S-polynomial of $(t_f, \fu, t_g, \gv)$. Since $(t_f, \fu, t_g, \gv)$ is super regular or regular, we have $\lpp(\bru) = \lpp(t_f \u)$. Next we will show that $\brfu$ has a standard representation w.r.t. $G$.

We discuss two cases: (1) $t_f(\fu) $ is F5-divisible by $G$, and (2) $t_f(\fu) $ F5-rewritable and {\em not} F5-divisible by $G$.


(1). If {\em $t_f(\fu) $ is F5-divisible by $G$}, assuming $\lpp(\u) = x^\alpha \e_i$, then by the definition of F5-divisible, there exists $\hw\in G$ with $\lpp(\w)=x^\beta \e_j$ and $h\not = 0$, such that $\lpp(h)$ divides $t_f x^\alpha $ and $\e_i \succ \e_j$. Consider the polynomial $h(f_i^{[\e_i]}) - f_i(\hw) = 0^{[h\e_i - f_i\w]} \in I$, then $\lpp(h\e_i - f_i\w) = \lpp(h)\e_i$ divides $t_f x^\alpha \e_i = \lpp(t_f\u) = \lpp(\bru)$. Let $t_h := (t_f x^\alpha)/\lpp(h)$ and $c := \lc(\u)/\lc(h\e_i - f_i\w)$. For the polynomial $\brfu - c t_h (0^{[h\e_i - f_i\w]}) = \brf^{[\bru - c t_h (h\e_i - f_i\w)]} \in I$, we have $\lpp(\bru - c t_h (h\e_i - f_i\w))\prec \lpp(\bru)$. So $\brf^{[\bru - c t_h (h\e_i - f_i\w)]}$ has a standard representation w.r.t. $G$ by {Claim 1}, which implies $\brfu$ also has a standard representation w.r.t. $G$, since $\lpp(\bru - c t_h (h\e_i - f_i\w))\prec \lpp(\bru)$.



(2). If {\em $t_f(\fu) $ is F5-rewritable and not F5-divisible by $G$}, we need three steps to show $\brfu$ has a standard representation w.r.t. $G$. \smallskip\\
{\bf First:} We show that there exists $f_0^{[\u_0]}\in G$ such that $t_f(\fu) $ is  F5-rewritable by $f_0^{[\u_0]}$ and $t_0(f_0^{[\u_0]})$ is {\em neither} F5-divisible {\em nor}  F5-rewritable by $G$ where $t_0 = \lpp(t_f\u)/\lpp(\u_0)$.\smallskip\\
{\bf Second:} For such $f_0^{[\u_0]}$, we show that $\lpp(\brf) \succeq \lpp(t_0f_0)$ where $t_0 = \lpp(t_f\u)/\lpp(\u_0)$.\smallskip\\
{\bf Third:} We prove that $\brfu$ has a standard representation w.r.t. $G$.\smallskip

Proof of the {\bf First} fact. Since $t_f(\fu) $ is F5-rewritable by $G$, suppose $t_f(\fu) $ is F5-rewritable by some $f_1^{[\u_1]}\in G$, i.e. $\lpp(\u_1)$ divides $\lpp(t_f\u)$ and $f_1^{[\u_1]} < \fu $ which means $f_1^{[\u_1]}$ is added to $G$ later than $\fu $.
Let $t_1 := \lpp(t_f \u) /\lpp(\u_1)$. The polynomial $t_1(f_1^{[\u_1]})$ is not F5-divisible by $G$, since $\lpp(t_f \u) = \lpp(t_1\u_1)$ and $t_f(\fu) $ is not F5-divisible by $G$. If $t_1(f_1^{[\u_1]})$ is not F5-rewritable by $G$, then $f_1^{[\u_1]}$ is the one we are looking for. Otherwise, there exists $f_2^{[\u_2]}\in G$ such that $t_1(f_1^{[\u_1]})$ is F5-rewritable by $f_2^{[\u_2]}$. Note that $t_f(\fu) $ is also F5-rewritable by $f_2^{[\u_2]}$ and we have $\fu  > f_1^{[\u_1]} > f_2^{[\u_2]}$. Let $t_2 := \lpp(t_f \u) /\lpp(\u_2)$. The polynomial $t_2(f_2^{[\u_2]})$ is not F5-divisible by $G$ as well, since $t_f(\fu) $ is not F5-divisible by $G$. We next discuss whether $t_2(f_2^{[\u_2]})$ is not F5-rewritable by $G$. In the better case, $f_2^{[\u_2]}$ is the desired one if $t_2(f_2^{[\u_2]})$ is not F5-rewritable by $G$; while in the worse case, $t_2(f_2^{[\u_2]})$ is F5-rewritable by some $f_3^{[\u_3]} \in G$. We can repeat the above discussions for the worse case. Finally, we will get a chain $\fu  > f_1^{[\u_1]} > f_2^{[\u_2]} > \cdots$. This chain must terminate, since $G$ is finite. Suppose $f_s^{[\u_s]}$ is the last one in the chain. Then $t_f(\fu) $ is F5-rewritable by $f_s^{[\u_s]}$ and $t_s(f_s^{[\u_s]})$ is {\em neither} F5-divisible {\em nor} F5-rewritable by $G$ where $t_s = \lpp(t_f\u)/\lpp(\u_s)$.

Proof of the {\bf Second} fact. From the {\bf First} fact, we have that $t_0(f_0^{[\u_0]})$ is {\em neither} F5-divisible {\em nor} F5-rewritable by $G$ where $t_0 = \lpp(t_f\u)/\lpp(\u_0)$. Next, we prove the {\bf Second} fact by contradiction. Assume $\lpp(\brf) \prec \lpp(t_0f_0)$. Let $c_0 := \lc(\bru)/\lc(\u_0)$. Then for the polynomial $\brfu - c_0 t_0 (f_0^{[\u_0]}) = (\brf - c_0 t_0 f_0)^{[\bru - c_0 t_0 \u_0]} \in I$, we have $\lpp(\brf - c_0 t_0 f_0) = \lpp(t_0 f_0)$ and $\lpp(\bru - c_0 t_0 \u_0) \prec \lpp(\bru) = \lpp(t_0\u_0)$. So $(\brf - c_0 t_0 f_0)^{[\bru - c_0 t_0 \u_0]}$ has a standard representation w.r.t. $G$ by {Claim 1}, and hence, according to the basic property of standard representations, there exists $\hw\in G$ such that $\lpp(h)$ divides $\lpp(\brf - c_0 t_0 f_0) = \lpp(t_0 f_0)$ and $\lpp(t_h\w) \preceq \lpp(\bru - c_0 t_0 \u_0)\prec \lpp(t_0\u_0)$ where $t_h=\lpp(t_0f_0)/\lpp(h)$. Next consider the critical pair $[f_0^{[\u_0]}, \hw]$. Since $\lpp(t_0f_0) = \lpp(t_h h)$, the critical pair $[f_0^{[\u_0]}, \hw]$ has two possible forms.
\smallskip\\
{Form 1:} $[f_0^{[\u_0]}, \hw] = (t_0, f_0^{[\u_0]}, t_h, \hw)$. Since $\lpp(t_0\u_0)\succ \lpp(t_h \w)$, the critical pair $[f_0^{[\u_0]}, \hw]$ is regular and is smaller than $(t_f, \fu, t_g, \gv)$ in fashion (b), which means $[f_0^{[\u_0]}, \hw]$ lies in $Done$ and $t_0(f_0^{[\u_0]})$ is {\em either} F5-divisible {\em or} F5-rewritable by $G$, which contradicts with the property that $t_0(f_0^{[\u_0]})$ is {\em neither} F5-divisible {\em nor} F5-rewritable by $G$.
\smallskip\\
{Form 2:} $[f_0^{[\u_0]}, \hw] = (\bar{t}_0, f_0^{[\u_0]}, \bar{t}_h, \hw)$ where $\bar{t}_0$ divides $t_0$ and $\bar{t}_0 \not= t_0$. Since $\lpp(t_0\u_0) \succ \lpp(t_h \w)$, the critical pair $(\bar{t}_0, f_0^{[\u_0]}, \bar{t}_h, \hw)$ is also regular and is smaller than $(t_f, \fu, t_g, \gv)$ in fashion (a), which means $(\bar{t}_0, f_0^{[\u_0]}, \bar{t}_h, \hw)$ lies in $Done$ and $\bar{t}_0(f_0^{[\u_0]})$ is {\em either} F5-divisible {\em or} F5-rewritable by $G$. Then $t_0(f_0^{[\u_0]})$ is also {\em either} F5-divisible {\em or} F5-rewritable by $G$. This  contradicts with the property that $t_0(f_0^{[\u_0]})$ is {\em neither} F5-divisible {\em nor} F5-rewritable by $G$.
\smallskip
In either case, the {\bf Second} fact is proved.

Proof of the {\bf Third} fact. According to the second fact, we have $\lpp(\brf) \succeq \lpp(t_0f_0)$ where $t_0 = \lpp(t_f\u)/\lpp(\u_0)$. Let $c_0 := \lc(\bru)/\lc(\u_0)$.
For the polynomial $\brfu - c_0 t_0 (f_0^{[\u_0]}) = (\brf - c_0 t_0 f_0)^{[\bru - c_0 t_0 \u_0]} \in I$, we have $\lpp(\brf - c_0 t_0 f _0)\preceq \lpp(\brf)$ and $\lpp(\bru - c_0 t_0 \u_0)\prec \lpp(\bru)$. So $(\brf - c_0 t_0 f_0)^{[\bru - c_0 t_0 \u_0]}$ has a standard representation w.r.t. $G$ by {\bf Claim 1}. Note that $\lpp(\brf)\succeq \lpp(t_0 f _0)$ and $\lpp(\bru)=\lpp(t_0 \u_0)$. So after adding $c_0 t_0 f_0$ to both sides of the standard representation of $\brfu - c_0 t_0 (f_0^{[\u_0]}) = (\brf - c_0 t_0 f_0)^{[\bru - c_0 t_0 \u_0]}$, then we will get a standard representation of $\brfu$ w.r.t. $G$.
\end{proof}

\begin{proof}[Proof of {\bf Claim 3}]
We also prove two cases: (1) $t_g(\gv) $ is F5-divisible by $G$, and (2) $t_g(\gv) $ is F5-rewritable and {\em not} F5-divisible by $G$.

(1). If {\em $t_g(\gv) $ is F5-divisible by $G$}, assuming $\lpp(\v) = x^\alpha \e_i$, then by the definition of F5-divisible, there exists $\hw\in G$ with $\lpp(\w)=x^\beta \e_j$ and $h\not = 0$ such that $\lpp(h)$ divides $t_g x^\alpha $ and $\e_i \succ \e_j$. Consider the polynomial $h(f_i^{[\e_i]}) - f_i(\hw) = 0^{[h\e_i - f_i\w]} \in I$, then $\lpp(h\e_i - f_i\w) = \lpp(h)\e_i$ divides $t_g x^\alpha \e_i = \lpp(t_g\v)$. Let $t_h := (t_g x^\alpha)/\lpp(h)$ and $c := \lc(\v)/\lc(h\e_i)$. For the polynomial $t_g(\gv)  - c t_h (0^{[h\e_i - f_i\w]}) = (t_g g)^{[t_g \v - c t_h (h\e_i - f_i\w)]}  \in I$, we have $\lpp(t_g \v - c t_h (h\e_i - f_i\w))\prec \lpp(t_g\v) \prec \lpp(t_f\u)$. So $(t_g g)^{[t_g \v - c t_h (h\e_i - f_i\w)]}$ has a standard representation w.r.t. $G$ by {\bf Claim 1}. According to the basic property of standard representations, there exists $h'^{[\w']}\in G$ such that $\lpp(h')$ divides $\lpp(t_g g) = \lpp(t_f f)$ and $\lpp(t'_h\w') \preceq \lpp(t_g \v - c t_h (h\e_i - f_i\w)) \prec \lpp(t_g\v)$ where $t'_h=\lpp(t_g g)/\lpp(h')$. Next consider the critical pair $[\fu, h'^{[\w']}]$. Note that $\lpp(t_f f) = \lpp(t_g g) = \lpp(t'_h h')$, so the critical pair of $[\fu, h'^{[\w']}]$ also has two possible forms.
\smallskip\\
{Form 1:} $[\fu, h'^{[\w']}] = (t_f, \fu, t'_h, h'^{[\w']})$. Since $\lpp(t_f\u) \succ \lpp(t_g\v) \succ \lpp(t'_h \w')$, the critical pair $[\fu, h'^{[\w']}]$ is regular and is smaller than $(t_f, \fu, t_g, \gv)$ in fashion (c), which means $[\fu, h'^{[\w']}]$ lies in $Done$ and $t_f(\fu)$ is {\em either} F5-divisible {\em or} F5-rewritable by $G$.\\
{Form 2:} $[\fu, h'^{[\w']}] = (\bar{t}_f, \fu, \bar{t'}_h, h'^{[\w']})$ where $\bar{t}_f$ divides $t_f$ and $\bar{t}_f \not= t_f$. Since $\lpp(t_f\u) \succ \lpp(t_g\v) \succ \lpp(t_h \w)$, the critical pair $(\bar{t}_f, \fu, \bar{t'}_h, h'^{[\w']})$ is also regular and is smaller than $(t_f, \fu, t_g, \gv)$ in fashion (a), which means $(\bar{t}_f, \fu, \bar{t'}_h, h'^{[\w']})$ lies in $Done$ and $\bar{t}_f(\fu)$ is {\em either} F5-divisible {\em or} F5-rewritable by $G$. Then $t_f(\fu)$ is also {\em either} F5-divisible {\em or} F5-rewritable by $G$, since $\bar{t}_f$ divides $t_f$.

(2). {\em The polynomial $t_g(\gv) $ is F5-rewritable and not F5-divisible by $G$}. Since $\lpp(t_g\v) \prec \lpp(t_f\u)$, by using a similar method in the proof of the First and Second facts in {Claim 2}, we have that there exists $g_0^{[\v_0]}\in G$ such that $t_g(\gv) $ is F5-rewritable by $g_0^{[\v_0]}$ and $t_0(g_0^{[\v_0]})$ is {\em neither} F5-divisible {\em nor} F5-rewritable by $G$ where $t_0 = \lpp(t_g\v)/\lpp(\v_0)$. Moreover, we have $\lpp(t_g g) \succeq \lpp(t_0 g_0)$ where $t_0 = \lpp(t_g\v)/\lpp(\v_0)$.

If $\lpp(t_g g) = \lpp(t_0 g_0) = \lpp(t_f f)$, then the critical pair $[\fu, g_0^{[\v_0]}]$ has two possible forms.
\smallskip\\
{Form 1:} $[\fu, g_0^{[\v_0]}] = (t_f, \fu, t_0, g_0^{[\v_0]})$. Since $\lpp(t_f\u) \succ \lpp(t_g\v) = \lpp(t_0 \v_0)$, the critical pair $[\fu, g_0^{[\v_0]}]$ is regular and is smaller than $(t_f, \fu, t_g, \gv)$ in fashion (d), which means $[\fu, g_0^{[\v_0]}]$ lies in $Done$ and $t_f(\fu)$ is {\em either} F5-divisible {\em or} F5-rewritable by $G$.\\
{Form 2:} $[\fu, g_0^{[\v_0]}] = (\bar{t}_f, \fu, \bar{t}_0, g_0^{[\v_0]})$ where $\bar{t}_f$ divides $t_f$ and $\bar{t}_f \not= t_f$. Since $\lpp(t_f\u) \succ \lpp(t_g\v) = \lpp(t_0 \v_0)$, the critical pair $(\bar{t}_f, \fu, \bar{t}_0, g_0^{[\v_0]})$ is also regular and is smaller than $(t_f, \fu, t_g, \gv)$ in fashion (a), which means $(\bar{t}_f, \fu, \bar{t}_0, g_0^{[\v_0]})$ lies in $Done$ and $\bar{t}_f(\fu)$ is {\em either} F5-divisible {\em or} F5-rewritable by $G$. Then $t_f(\fu)$ is also {\em either} F5-divisible {\em or} F5-rewritable by $G$, since $\bar{t}_f$ divides $t_f$.

Otherwise, $\lpp(t_g g) \succ \lpp(t_0 g_0)$ holds. Let $c := \lc(\v)/\lc(\v_0)$. For the polynomial $t_g \gv - c t_0 (g_0^{[\v_0]}) = (t_g g - c t_0 g_0)^{[t_g \v - c t_0 \v_0]}$, we have $\lpp(t_g g - c t_0 g_0) = \lpp(t_g g)$ and $\lpp(t_g \v - c t_0 \v_0) \prec \lpp(t_g \v)$. Then Claim 1 shows $(t_g g - c t_0 g_0)^{[t_g \v - c t_0 \v_0]}$ has a standard representation w.r.t. $G$, and hence, by the basic property of standard representations, there exists $\hw\in G_{end}$ such that $\lpp(h)$ divides $\lpp(t_g g - c t_0 g_0)=\lpp(t_g g)$ and $\lpp(t_h \w) \preceq \lpp(t_g \v - c t_0 \v_0) \prec \lpp(t_g \v)$ where $t_h = \lpp(t_g g)/\lpp(h)$. Note that $\lpp(t_h h) = \lpp(t_g g) = \lpp(t_f f)$. The critical pair of $[\fu, \hw]$ also has two possible forms.
\smallskip\\
{Form 1:} $[\fu, \hw] = (t_f, \fu, t_h, \hw)$. Since $\lpp(t_f\u) \succ \lpp(t_g\v) \succ \lpp(t_h \w)$, the critical pair $[\fu, \hw]$ is regular and is smaller than $(t_f, \fu, t_g, \gv)$ in fashion (c), which means $[\fu, \hw]$ lies in $Done$ and $t_f(\fu)$ is {\em either} F5-divisible {\em or} F5-rewritable by $G$.\\
{Form 2:} $[\fu, \hw] = (\bar{t}_f, \fu, \bar{t}_h, \hw)$ where $\bar{t}_f$ divides $t_f$ and $\bar{t}_f \not= t_f$. Since $\lpp(t_f\u) \succ \lpp(t_g\v) \succ \lpp(t_h \w)$, the critical pair $(\bar{t}_f, \fu, \bar{t}_h, \hw)$ is also regular and is smaller than $(t_f, \fu, t_g, \gv)$ in fashion (a), which means $(\bar{t}_f, \fu, \bar{t}_h, \hw)$ lies in $Done$ and $\bar{t}_f(\fu)$ is {\em either} F5-divisible {\em or} F5-rewritable by $G$. Then $t_f(\fu)$ is also {\em either} F5-divisible {\em or} F5-rewritable by $G$, since $\bar{t}_f$ divides $t_f$.

{\bf Claim 3} is proved.
\end{proof}



\end{appendix}

\end{document}